\definecolor{gris}{cmyk}{0,0,0,0.15}
\newtheorem{thm}{Theorem}[section]
\newtheorem{lemma}{Lemma}[section]
\newtheorem{definition}{Definition}[section]
\newcommand{\R}{\mathbb{R}}
\begin{document}

\begin{flushright}CUQM-160~~~~~~\end{flushright}
\title{General comparison theorems for the Klein--Gordon equation\newline in $d$ dimensions}
\author{Richard L. Hall}
\email{richard.hall@concordia.ca}
\affiliation{Department of Mathematics and Statistics, Concordia University,
1455 de Maisonneuve Boulevard West, Montr\'eal,
Qu\'ebec, Canada H3G 1M8}
\author{Hassan Harb}
\email{hassan.harb@concordia.ca}
\affiliation{Department of Mathematics and Statistics, Concordia University,
1455 de Maisonneuve Boulevard West, Montr\'eal,
Qu\'ebec, Canada H3G 1M8}

\begin{abstract}
We study bound-state solutions of the Klein--Gordon equation $\varphi^{\prime\prime}(x) =\big[m^2-\big(E-v\,f(x)\big)^2\big]  \varphi(x),$ 
for bounded vector potentials which in one spatial dimension have the form $V(x) = v\,f(x),$ where $f(x)\le 0$ is the shape of a finite symmetric central potential  that is monotone non-decreasing on $[0, \infty)$ and vanishes as $x\rightarrow\infty.$ Two principal results are  reported.  First, it is shown that the eigenvalue problem in the coupling parameter $v$ leads to spectral functions of the form  $v= G(E)$ which are concave, and at most uni-modal with a maximum near the lower limit $E = -m$ of the eigenenergy $E \in (-m, \, m)$.  This formulation of the spectral problem immediately extends to central potentials in $d > 1$ spatial dimensions.  Secondly, for each of the dimension cases, $d=1$ and $d  \ge 2$, a comparison theorem is proven, to the effect that if two potential shapes are ordered $f_1(r) \leq f_2(r),$ then so are the  corresponding pairs of spectral functions $G_1(E) \leq G_2(E)$ for each of the existing eigenvalues.  These results remove the restriction to positive eigenvalues necessitated by earlier comparison theorems for the Klein--Gordon equation.
\end{abstract}

\maketitle
\noindent{\bf Keywords:~}Klein--Gordon equation, discrete spectrum, comparison theorem  \\  
\noindent{\bf PACS:} 03.65.Pm, 03.65.Ge, 36.20.Kd.
\vskip 0.2 in
\section{Introduction}

The elementary comparison theorem of non-relativistic quantum mechanics states that if two potentials are ordered, then the respective bound-state eigenvalues are correspondingly ordered:
\begin{eqnarray*}
 V_1 \leq V_2\Longrightarrow E_1\leq E_2.
\end{eqnarray*}
 In the non-relativistic case (Schr\"odinger's Equation), this is a direct consequence of the min-max principle since the Hamiltonian $H = -\Delta + V$ is bounded from below, and the discrete spectrum can be characterized variationally \cite{reedsimon}. However, the min-max principle is not valid in a simple form in the relativistic case because the energy operators are not bounded from below \cite{Fr, Gold, Gr}. Regarding the Klein--Gordon equation, since only a few analytical solutions are known, the existence of lower and upper bounds for the eigenvalues is important, and establishing comparison theorems for the eigenvalues of this equation is of considerable interest.  We suppose that the vector potential $V$ is written in the form $V(x)=v\,f(x)$, where  $v > 0$  and $f(x)$ are defined respectively as the coupling parameter and the potential shape.  The literature does provide explicit solved examples, such as the square-well potential \cite{ssw,Gr1}, the exponential potential \cite{bl, Gr2}, the Woods-Saxon potential, and the cusp potential \cite{vc}.  Based on these examples it is clear that the relation  $E(v)$ is not monotonic as it is in the Dirac relativistic equation, \cite{RH-99, RH-08, RP-15,R:P-16}, and indeed for Schr\"odinger's non-relativistic equation.  Consequently, comparison theorems for the Klein Gordon equation were restricted to positive energies \cite{RA-08, RH-10, RP-16}, and some are only valid for the ground state. In the present study, we have established new comparison theorems valid for negative potentials and for both positive and negative eigen-energies, and not just for the ground state. Throughout this paper, $V$ represents the time component of a four-vector; the scalar potential (a linear perturbation of the mass) is assumed to be zero.
 
 \medskip The idea that had a profound effect on the present work and, in particular, eliminated an earlier positivity restriction for energies, was our thinking of $v$ as a function of $E$.  This enabled us to arrive at a {\it function} $v(E)$, whereas $E(v)$ is a two-valued expression.  In section $2$, we establish the principal features of the spectral curves $v(E)$ for the class of negative bounded potentials that vanish at infinity. In section $3$ we  solve the Klein--Gordon equation analytically for the square-well potential in $d\geq 1$ dimensions. In section $4$ we prove some comparison theorems: the principal results claim that for any discrete eigenvalue $E \in  (-m,\, m)$ and negative potential-shape functions  $f_1(r)$ and $f_2(r)$ we have $f_1(r)\leq f_2(r)\implies v_1\leq v_2$.   In section $5$ we exhibit a complete recipe for spectral bounds for this class of potentials based on comparisons with the exactly soluble square-well problem.
\section{General features of the spectral curve $G(E) = v(E)$.}

\subsection{One-dimensional case}
The Klein--Gordon equation in one dimension is given by:
\begin{eqnarray}\label{KG}
\varphi^{\prime\prime}(x) =\big[m^2-\big(E-V(x)\big)^2\big]  \varphi(x),
\end{eqnarray}
 where $\varphi^{\prime\prime}$ denotes the second order derivative of $\varphi$ with respect to $x$, natural units $\hbar=c=1$ are used, and $E$ is the energy of a spinless particle of mass $m$.
 We suppose that the potential function $V$ is expressed as $V(x)=v f(x)$ with $v>0$ and $f$ satisfies the following conditions:
\begin{enumerate}
\item  $V(x) = v f(x), x\in\mathbb{R}$, where $v>0$ is the coupling parameter and $f(x)$ is the potential shape;
\item  $f$ is even $f(x) = f(-x)$;
\item  $f$ is not identically zero and non-positive;
\item  $f$ is attractive, that is $f$ is monotone non-decreasing over $[0,\infty)$;
\item  $f$ vanishes at infinity, i.e $\displaystyle\lim_{x\to \pm \infty} f(x)= 0$.
\end{enumerate}
We also assume that $V(x)=v f(x)$ is in this class of potentials, for which the Klein--Gordon equation (\ref{KG}) has at least one discrete eigenvalue $E$, and that equation (\ref{KG}) is the eigen-equation for the eigenstates.  Because of condition $5$, equation (\ref{KG})  has the asymptotic form
\begin{eqnarray*}
\varphi^{\prime\prime}=(m^2-E^2)\varphi,
\end{eqnarray*}
at infinity, with solutions $\varphi(x)=C_1 e^{\sqrt{k} |x|}+C_2 e^{-\sqrt{k} |x|}$, where $C_1$ and $C_2$ are constants of integration, and $k=m^2-E^2$.
The radial wave function of $\varphi$ vanishes at infinity; thus, $C_1=0$. Since $\varphi \in L^2(\mathbb{R})$, then $k>0$ which means that
\begin{eqnarray}\label{cond}
 |E|<m.
\end{eqnarray}
Suppose that $\varphi(x)$ is a solution of (\ref{KG}). Then by direct substitution we conclude that $\varphi(-x)$ is another solution of (\ref{KG}). Thus, by using linear combinations, we see that all the solutions of this equation may be assumed to be either even or odd.
Hence, if $\varphi$ is even then $\varphi^\prime(0) = 0$, and if $\varphi$ is odd then $\varphi(0) = 0$.
Since $\varphi\in L^2(\mathbb{R})$ then $\int_{-\infty}^{+\infty}\varphi^2 dx <\infty$. This means that the wave functions can be normalized and consequently we shall assume that $\varphi$ satisfies the normalization condition
\begin{eqnarray}\label{norm}
 ||\varphi||^2=\int_{-\infty}^{\infty} \varphi^2(x)  dx=1.
\end{eqnarray}
\begin{definition}
We denote by $\langle f\rangle$ and $\langle f^2\rangle$ the mean values of $f$ and $f^2$ respectively, where
$\langle\varphi,\psi\rangle = \int_{-\infty}^{\infty}\varphi(x)\psi(x) dx$ is the inner product on $L^2(\mathbb{\R})$,  that is    $\langle f\rangle=\langle\varphi,f\varphi\rangle=\int_{-\infty}^\infty f(x)\varphi^2(x)dx$ and      $\langle f^2\rangle=\langle\varphi,f^2\varphi\rangle=\int_{-\infty}^\infty f^2(x)\varphi^2(x) dx$.
\end{definition}
\medskip
\begin{lemma}
\begin{eqnarray}\label{mineq}
2E\langle f\rangle \,  <  \, v \langle f^2\rangle, \, \forall \, |E| \, <m.
\end{eqnarray}
\end{lemma}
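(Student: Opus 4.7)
The plan is to obtain the inequality as a virial-type identity by multiplying the Klein--Gordon equation (\ref{KG}) by $\varphi$ and integrating over $\mathbb{R}$. Expanding $(E-vf(x))^2 = E^2 - 2Evf(x) + v^2f^2(x)$ on the right-hand side and using the normalization condition (\ref{norm}) together with the definitions of $\langle f\rangle$ and $\langle f^2\rangle$, I would obtain an identity of the form
\begin{equation*}
\int_{-\infty}^{\infty}\varphi(x)\varphi''(x)\,dx \;=\; (m^2-E^2) - 2Ev\langle f\rangle + v^2\langle f^2\rangle.
\end{equation*}

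Next I would handle the left-hand side by integration by parts. The boundary terms vanish because, from the asymptotic analysis given just after (\ref{KG}), both $\varphi$ and $\varphi'$ decay exponentially as $|x|\to\infty$ (since $|E|<m$), leaving $\int \varphi\varphi''\,dx = -\int (\varphi')^2\,dx$. Substituting this into the identity and rearranging yields
\begin{equation*}
v^2\langle f^2\rangle - 2Ev\langle f\rangle \;=\; \int_{-\infty}^{\infty}(\varphi'(x))^2\,dx \;+\; (m^2-E^2).
\end{equation*}

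The right-hand side is strictly positive: the integral $\int (\varphi')^2\,dx$ is non-negative, and $m^2-E^2>0$ by the bound-state condition (\ref{cond}). Dividing both sides by $v>0$ gives the desired strict inequality $2E\langle f\rangle < v\langle f^2\rangle$.

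The only subtle point is that the inequality must be strict for \emph{every} $E$ with $|E|<m$, even when $E\langle f\rangle$ is positive (which occurs, since $\langle f\rangle<0$ and $E$ may be negative). This strictness is not automatic from the integration-by-parts identity alone; it is guaranteed precisely by the open-interval condition $|E|<m$ supplying the strictly positive term $m^2-E^2$, independent of the sign of $\int(\varphi')^2\,dx$. Thus the bound-state constraint (\ref{cond}) is doing the essential work, and this is the step I would double-check most carefully.
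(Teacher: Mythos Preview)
Your proof is essentially identical to the paper's: multiply (\ref{KG}) by $\varphi$, integrate, apply integration by parts, and conclude from the strict negativity of $-\int_{-\infty}^{\infty}(\varphi')^2\,dx + E^2 - m^2$. Note only a sign slip in your first displayed identity --- the correct right-hand side is $(m^2-E^2) + 2Ev\langle f\rangle - v^2\langle f^2\rangle$, which is in fact what your subsequent rearrangement silently uses.
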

\begin{proof} Expanding equation (\ref{KG}) we get:
\begin{eqnarray*}
\varphi^{\prime\prime}(x)=(m^2-E^2)\varphi(x) + 2Evf(x)\varphi(x) - v^2 f^2(x)\varphi(x).
\end{eqnarray*}
Multiplying both sides by $\varphi$ and integrating over $\mathbb{R}$ we obtain:
\begin{eqnarray*}
\int_{-\infty}^\infty \varphi^{\prime\prime}(x)\varphi(x) dx=m^2-E^2+2Ev\langle f\rangle-v^2\langle f^2\rangle.
\end{eqnarray*}
After applying integration by parts and using the fact that $\varphi$ vanishes at $\pm\infty$, the left-hand side of the last equation becomes $-\int_{-\infty}^\infty \big(\varphi^{\prime}(x)\big)^2dx$.
Thus, $-\int_{-\infty}^\infty \big(\varphi^{\prime}(x)\big)^2dx+E^2-m^2=2Ev\langle f\rangle-v^2\langle f^2\rangle$.
Since the left-hand side is negative, we have the desired result.
\end{proof}
\noindent We now define the operator $K$ as:
\begin{eqnarray}\label{op}
K=-\frac{\partial^2}{\partial x^2}+2Evf-v^2f^2.
\end{eqnarray}
If $\varphi$ is solution of the Klein--Gordon equation (\ref{KG}), then we have:
\begin{eqnarray}\label{KG2}
K\varphi = (E^2-m^2)\varphi,
\end{eqnarray}
and it follows 
\begin{eqnarray}\label{KG3}
\langle\varphi,K\varphi\rangle=\langle\varphi,(E^2-m^2)\varphi\rangle=E^2-m^2.
\end{eqnarray}
We observe that the domain of $K$ is $D_K = H^2(\mathbb{R})$, where $H^2(\mathbb{R})$ is the Sobolev space
defined as follows:
\begin{eqnarray*}
H^2(\mathbb{R}) = \{\varphi\in L^2(\mathbb{R}):\varphi^\prime, \varphi^{\prime\prime}\in L^2(\mathbb{R})\}.
\end{eqnarray*}
Since $||K\varphi|| = |E^2-m^2|\cdot ||\varphi||\leq m^2 ||\varphi||$ for all $\varphi\in D_K$, then $K$ is a bounded operator.
This implies that $K$ is continuous.
We also observe that $K$ is symmetric, that is to say: $\langle\varphi, K\psi\rangle = \langle K\varphi, \psi\rangle$.
\medskip
We now consider a family of Klein--Gordon spectral problems where $v = v(E)$ is a function of $E$. Let $\varphi_E$ denote the partial derivative of $\varphi$ with respect to $E$.
If we differentiate the normalization integral (\ref{norm}) partially with respect to $E$, we obtain the orthogonality relation 
$\langle\varphi,\varphi_E\rangle=0.$
Furthermore, differentiating equation (\ref{KG3}) with respect to $E$ we obtain:
\begin{eqnarray}\label{diff}
\langle\varphi_E,K\varphi\rangle+\langle\varphi,K_E\varphi\rangle+\langle\varphi,K\varphi_E\rangle=2E.
\end{eqnarray}
The symmetry of $K$ and the orthogonality of $\varphi$ and $\varphi_E$ imply that
\begin{eqnarray*}
 \langle\varphi,K\varphi_E\rangle=\langle\varphi_E,K\varphi\rangle=(E^2-m^2)\langle\varphi_E,\varphi\rangle=0.
\end{eqnarray*}
Then by using the expression 
\begin{eqnarray}\label{K_E}
K_E=2vf+2Evf-2vv_Ef^2
\end{eqnarray}
in equation (\ref{diff}), we obtain the key equation for our theorem in this section, namely:
\begin{eqnarray}\label{thm1}
v_E=\dfrac{E-v\langle f\rangle}{E\langle f\rangle-v\langle f^2\rangle}.
\end{eqnarray}
\medskip
\begin{thm}
If $E\geq v\langle f\rangle$, then $E_1\leq E_2\Rightarrow v(E_1)\geq v(E_2)$; and
 if $E < v\langle f\rangle$, then $E_1 <  E_2\Rightarrow v(E_1) < v(E_2)$.
\end{thm}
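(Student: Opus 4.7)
The plan is to read off the monotonicity of $v(E)$ directly from the sign of $v_E$ in equation (\ref{thm1}). The numerator $E - v\langle f\rangle$ is by construction either nonnegative (first case) or negative (second case), so the essential step is to prove that the denominator $E\langle f\rangle - v\langle f^2\rangle$ is \emph{strictly negative} for every admissible bound state. Once this is done, $v_E$ has sign opposite to that of the numerator, which in each regime gives exactly the stated monotonicity.

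To control the denominator, I would first record that $\langle f\rangle<0$ and $\langle f^2\rangle>0$: by hypothesis $f\le 0$ is not identically zero, and eigenfunctions of the Sturm--Liouville-type equation (\ref{KG}) vanish only on a set of measure zero, so these integrals are genuinely signed. Together with $v>0$, this immediately handles the case $E\ge 0$, where $E\langle f\rangle\le 0$ and $-v\langle f^2\rangle<0$, so the denominator is $<0$. The case $E<0$ is the one place Lemma (\ref{mineq}) is actually used: there $E\langle f\rangle>0$, but the lemma gives $2E\langle f\rangle < v\langle f^2\rangle$, and because $E\langle f\rangle<2E\langle f\rangle$ in this sign regime, the denominator is strictly negative here too.

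With the denominator negative, the identity $v_E = (E-v\langle f\rangle)/(E\langle f\rangle - v\langle f^2\rangle)$ yields $v_E\le 0$ whenever $E\ge v\langle f\rangle$ and $v_E>0$ whenever $E<v\langle f\rangle$. Integrating these sign inequalities along the spectral curve produces the two implications of the theorem, with strict inequality in the second case because the numerator is then nonzero.

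The mildly subtle point is the $E<0$ branch of the denominator argument, which would fail without Lemma (\ref{mineq}); this is precisely where one sees why the earlier literature had to restrict to positive energies and why passing to $v(E)$ rather than $E(v)$ is the right move. A secondary technical issue is justifying differentiability of $v$ along the spectral curve so that $v_E$ is meaningful; this comes from applying the implicit function theorem to equation (\ref{KG3}), whose $v$-derivative is a nonzero multiple of the same denominator we just bounded away from zero, so the inequality doing the work in the main argument also delivers the regularity that licenses it.
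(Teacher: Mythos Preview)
Your proof is correct and follows essentially the same approach as the paper's: both read off the sign of $v_E$ from formula (\ref{thm1}) by showing the denominator $E\langle f\rangle - v\langle f^2\rangle$ is strictly negative for all $|E|<m$ (trivially when $E\ge 0$, and via Lemma (\ref{mineq}) when $E<0$), so that $v_E$ has sign opposite to the numerator $E - v\langle f\rangle$. Your write-up is in fact tidier than the paper's, which contains sign misprints in its stated conclusions for $v_E$, and you additionally justify the differentiability of $v(E)$ via the implicit function theorem, a point the paper leaves implicit.
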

\begin{proof}
If $E\geq 0$, then $v_E\geq 0$, and the theorem holds immediately.
On the other hand, if $E < 0$, then by  (\ref{mineq}), $E\langle f\rangle < 2E\langle f\rangle < v\langle f^2\rangle$, which means that $E\langle f\rangle -  v\langle f^2\rangle <  0$.
Thus  $v\langle f\rangle \leq E<0 \Rightarrow v_E>0$, and  $E < v\langle f\rangle \Rightarrow v_E < 0.$
Therefore, the theorem has been proven.
\end{proof}
\begin{thm}
The spectral curve $G(E)$ is concave for all $E$, $|E| < m$
\end{thm}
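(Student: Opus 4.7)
My plan is to compute $v_{EE}$ via a second-order Hellmann--Feynman calculation along the spectral curve and then read off its sign. I would start from the eigen-relation $K\varphi = \mu\varphi$ with $\mu = E^2 - m^2$, viewing both $v = v(E)$ and $\varphi = \varphi(x,E)$ as functions of $E$ along the curve. Differentiating this eigen-relation twice in $E$, taking the inner product with $\varphi$, and using the orthogonality $\langle\varphi, \varphi_E\rangle = 0$ (from preserving the normalization (\ref{norm})) together with the symmetry of $K$, I arrive at
\[
2 = \mu_{EE} = \langle\varphi, K_{EE}\varphi\rangle + 2\langle\varphi, K_E\varphi_E\rangle.
\]

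Next I would evaluate each piece. From the total $E$-derivative $K_E = 2vf + 2v_E f(E - vf)$ (a regrouping of (\ref{K_E})), a direct computation yields $\langle\varphi, K_{EE}\varphi\rangle = 4v_E\langle f\rangle - 2v_E^2\langle f^2\rangle + 2v_{EE} D$, where $D := E\langle f\rangle - v\langle f^2\rangle$ is the denominator appearing in (\ref{thm1}). For the cross term I would use the first-order perturbation equation $(K - \mu)\varphi_E = 2E\varphi - K_E\varphi$ obtained by differentiating the eigen-relation once; pairing it with $\varphi_E$ and using $\varphi_E \perp \varphi$ gives $\langle\varphi, K_E\varphi_E\rangle = -\langle\varphi_E, J\varphi_E\rangle$, where $J := K - \mu = -\partial_x^2 + [m^2 - (E-vf)^2]$. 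Assembling the two computations produces the central identity
\[
v_{EE}\, D = \int_{-\infty}^{\infty}(1 - v_E f)^2\varphi^2\,dx + \langle\varphi_E, J\varphi_E\rangle.
\]

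Lemma 2.1 already gives $D < 0$ throughout $|E| < m$, so the concavity claim $v_{EE} \leq 0$ is equivalent to showing the right-hand side is nonnegative. The first integral is manifestly $\geq 0$. For the second term, $J$ is a Schr\"odinger-type operator whose potential $m^2 - (E - vf)^2$ is bounded below, tends to $m^2 - E^2 > 0$ at infinity, and has $\varphi$ as an eigenfunction with eigenvalue $0$; when $\varphi$ is nodeless, zero is the bottom of the spectrum of $J$, forcing $J \geq 0$ as an operator, so $\langle\varphi_E, J\varphi_E\rangle \geq 0$ and the argument closes for the ground-state branch. The main technical obstacle I expect is the excited-state case, in which $\varphi$ has nodes and $0$ is no longer the lowest eigenvalue of $J$; then $\langle\varphi_E, J\varphi_E\rangle$ can a priori be negative, and one would have to expand $\varphi_E$ in the eigenbasis of $J$ via the first-order perturbation equation and verify that the negative contributions from eigenvalues below $0$ are absorbed by the positive $\int(1 - v_E f)^2 \varphi^2\,dx$ piece.
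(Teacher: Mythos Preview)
Your second-order Hellmann--Feynman derivation is correct: the identity $v_{EE}\,D = \int(1 - v_E f)^2\varphi^2\,dx + \langle\varphi_E, J\varphi_E\rangle$ with $J = -\partial_x^2 + m^2 - (E-vf)^2$ and $D = E\langle f\rangle - v\langle f^2\rangle < 0$ follows exactly as you outline, and on the ground-state branch the nodeless $\varphi$ is the bottom eigenfunction of $J$ at eigenvalue $0$, so $J \geq 0$ and the argument closes. This is a genuinely different route from the paper. The paper never computes $v_{EE}$; instead it fixes the maximum $A = (E_{cr}, v_{cr})$ of $G$ (where by Theorem~II.1 one has $E_{cr} = v_{cr}\langle f_{cr}\rangle$), draws the chord to an arbitrary second point $B = (E_n, v_n)$, and uses only the first-derivative formula (\ref{thm1}) to argue that $t(E) = G(E) - G_c(E)$ satisfies $t(E_{cr}) = t(E_n) = 0$, $t'(E_{cr}) > 0$, and has a single interior critical point $E_m$, whence $t \geq 0$ on $[E_{cr},E_n]$. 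That argument invokes nothing about the nodal structure of $\varphi$, which is why the paper claims it for all states; your approach trades that uniformity for an explicit formula for $v_{EE}$, at the cost of needing an operator-positivity input.

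The gap you anticipate for excited states is genuine within your framework and is not merely cosmetic. When $\varphi$ has $n$ nodes, $0$ is the $(n{+}1)$-st eigenvalue of $J$, and solving $J\varphi_E = 2E\varphi - K_E\varphi$ in the $J$-eigenbasis gives $\langle\varphi_E, J\varphi_E\rangle = \sum_{k}\lambda_k^{-1}|\langle\psi_k, K_E\varphi\rangle|^2$, with strictly negative contributions from each $\lambda_k < 0$. There is no structural reason visible from the ingredients already available why these must be dominated by $\int(1 - v_E f)^2\varphi^2$, since the perturbation $K_E = 2vf + 2v_E f(E - vf)$ couples $\varphi$ nontrivially to the lower-lying eigenstates of $J$. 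So for excited branches your plan currently stops short of a proof, whereas the paper's chord argument sidesteps the issue entirely by never appealing to spectral positivity.
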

\begin{proof}
Suppose that for any $|E_i|<m$, $\varphi_i$ is the corresponding wave function,\\
 $\langle f_i\rangle =\displaystyle\int_{-\infty}^{+\infty} f(x)\cdot(\varphi_i)^2 dx$, and $\langle f_i^2\rangle = \displaystyle\int_{-\infty}^{+\infty} f^2(x)\cdot(\varphi_i)^2 dx $.
Let the maximum value of $G(E)$ be equal to $v_{cr}$. By {\bf theorem II.1}, the corresponding value of $E$ is $E_{cr} = v_{cr}\langle f_{cr}\rangle$. Consider the point $A\big(E_{cr},  v_{cr}\langle f_{cr}\rangle$\big) and fix the point $B\big(E_n, v_n\big)$ on the spectral curve $G(E)$, where $v_n = G(E_n)$, such that $E_n\neq E_{cr}$ and $E_n\in(-m,m)$. Then
\begin{eqnarray*}
(AB): G_c(E) = v_{cr} + \dfrac{v_{cr} - v_n}{v_{cr}\langle f_{cr}\rangle - E_n}\big(E - v_{cr}\langle f\rangle\big).
\end{eqnarray*}
Assume, without loss of generality, that $E_{cr} < E_n$, and consider the function $t(E) = G(E) - G_c(E)$ where $E\in [E_{cr}, E_n].$
Then $t'(E) = \dfrac{dt}{dE} = \dfrac{E - v\langle f\rangle}{E\langle f\rangle - v\langle f^2\rangle} + \dfrac{v_n - v_{cr}}{v_{cr}\langle f_{cr}\rangle - E_n}$, which vanishes at the point $\big(E_m, G(E_m)\big)$ with
\begin{eqnarray*}
E_m = \dfrac{v_m E_n\langle f_m\rangle - v_m v_{cr}\langle f_m\rangle\langle f_{cr}\rangle + v_m v_{cr}\langle f^2_m\rangle - v_m v_n\langle f^2_m\rangle}{E_n - v_{cr}\langle f_{cr}\rangle + \langle f_m\rangle\big(v_{cr} - v_n\big)}\\
 = \dfrac{v_m \langle f_m\rangle\big(E_n - v_{cr}\langle f_{cr}\rangle\big) + v_m\langle f_m\rangle^2 (v_{cr} - v_n) + v_m\langle f_m^2\rangle (v_{cr} - v_n) - v_m\langle f_m\rangle^2 (v_{cr} - v_n)}{E_n - v_{cr}\langle f_{cr}\rangle + \langle f_m\rangle\big(v_{cr} - v_n\big)}.
\end{eqnarray*}
Hence
\begin{eqnarray}\label{E_m}
E_m = v\langle f_m\rangle + \dfrac{v_m\big(v_{cr} - v_n\big)\big(\langle f_m^2\rangle - \langle f_m\rangle^2\big)}{E_n - v_{cr}\langle f_{cr}\rangle + \langle f_m\rangle\big(v_{cr} - v_n\big)}.
\end{eqnarray}
Since $t'(E_{cr}) > 0$ and $t(E_{cr}) = t(E_n) = 0$, then the point $\big(E_m, G(E_m)\big)$ must be a maximum point of $t(E)$ over the interval $[E_{cr}, E_n]$ and $t(E) \geq 0$ over $[E_{cr}, E_n]$. This means that the chord $[AB]$ is always beneath the spectral curve $G(E)$ over $[E_{cr}, E_n]$ and the proof is complete.
\end{proof}
\noindent We observe that $\varphi$ is not necessarily required to be a node-free state, this means that the previous theorem is valid for all states, ground state and any excited state.
\subsection {d-dimensional cases ($d>1$):}
The Klein--Gordon equation in $d$ dimensions is given by
\begin{eqnarray*}
\Delta_d\Psi (r)=[m^2-(E- V(r))^2]\Psi (r),
\end{eqnarray*}
where natural units $\hbar=c=1$ are used and $E$ is the discrete energy eigenvalue of a spinless particle of mass $m$.
We suppose here that the vector potential function $V(r)$, $r=||\bf{r}||$, is a radially-symmetric Lorentz vector potential (the time component of a space-time vector), which satisfies the following conditions:
\begin{enumerate}
\item $V$ is not identically zero and non-positive, that is $V\leq 0$;
\item $V$ is attractive;
\item $V$ vanishes at $\infty$.
\end{enumerate}
The operator $\Delta_d$ is the $d$-dimensional Laplacian. Hence, the wave function for $d > 1$ can be expressed as $\Psi (r) =R(r) Y_{l_{d-1,...,l_1}} (\theta_1, \theta_2,...,\theta_{d-1})$, where $R\in L^2 (\mathbb{R}^d)$ is a radial function and $Y_{l_{d-1,...,l_1}}$ is a normalized hyper-spherical harmonic with eigenvalues $l(l+d-1)$, $l = 0, 1, 2, ...$ \cite{rtd}
The radial part of the above Klein--Gordon equation can be written as:
\begin{eqnarray*}
\frac{1}{r^{d-1}}\frac{\partial}{\partial r}\bigg(r^{d-1}\frac{\partial}{\partial r}R(r)\bigg) = \bigg[m^2-\big(E-V(r)\big)^2+\frac{l(l+d-2)}{r^2}\bigg]R(r),
\end{eqnarray*}
where $R$ satisfies the second-order linear differential equation
\begin{eqnarray*}
R^{\prime\prime}(r) +\frac{d-1}{r} R^{\prime} (r)= \bigg[m^2-\big(E-V(r)\big)^2+\frac{l(l+d-2)}{r^2}\bigg]R(r).
\end{eqnarray*}
Applying the change of variable $R(r) = r^{-\frac{d-1}{2}}\varphi(r)$, we obtain the following reduced second-order differential equation:
\begin{eqnarray}\label{KG4}
\varphi^{\prime\prime}(r)=\bigg[ m^2-\big(E-V(r)\big)^2+\frac{Q}{r^2}\bigg]  \varphi(r),
\end{eqnarray}
where 
\begin{eqnarray*}
Q=\frac{1}{4}(2l+d-1)(2l+d-3),
\end{eqnarray*}
with $l = 0, 1, 2, ...$ and $d = 2, 3, 4, ...,$ which is the radial Klein-Gordon equation for $d > 1$ dimensions. The reduced wave function $\varphi$ satisfies $\varphi(0) = 0$ and
$\displaystyle\lim_{r\to\infty}\varphi = 0$ \cite{MMN}. For bound states, the normalization condition is:
\begin{eqnarray*}
\int_{0}^{\infty}\varphi^2 (r) dr= 1.
\end{eqnarray*}
Since $V$ vanishes at $\infty$, then equation (\ref{KG4}) becomes 
\begin{eqnarray*}
\varphi^{\prime\prime}=(m^2-E^2)\varphi
\end{eqnarray*}
near infinity, which means that $|E| <  m$ by the same reasoning as used for equation (\ref{cond}).
Since the derivative of the term $\frac{Q}{r^2}$ in equation (\ref{KG4}) with respect to $E$ is equal to zero, then  by the same reasoning the relation (\ref{mineq}) is also valid for all other $d>1$ dimensions.
We define the operator 
\begin{eqnarray*}
K=-\frac{\partial^2}{\partial r^2} + 2Evf - v^2f^2 + \frac{Q}{r^2}.
\end{eqnarray*}
We have
\begin{eqnarray*}
\langle K\rangle =\langle (E^2 - m^2)\varphi,\varphi\rangle = E^2-m^2.
\end{eqnarray*}
As in the previous section, $K$ is bounded and symmetric with $D_K = H^2(\mathbb{R}^d)$.
We consider the same family of Klein--Gordon spectral problems with $v = v(E)$
Since
$ K_E = 2vf + 2Ev_E f - 2vv_E f^2,$
 then we obtain the same relation as in the one-dimensional case,
\begin{eqnarray*}
v_E=\frac{E-v\langle f\rangle}{E\langle f\rangle-v\langle f^2\rangle}.
\end{eqnarray*}
\medskip
\begin{thm}
If $E\geq v\langle f\rangle$, then $E_1\leq E_2\Rightarrow v(E_1)\leq v(E_2)$, and
if $E < v\langle f\rangle$, then $E_1 <  E_2\Rightarrow v(E_1) > v(E_2)$.
\end{thm}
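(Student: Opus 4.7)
The plan is to mimic the proof of \textbf{Theorem II.1} with only cosmetic changes, since the key identity
\begin{equation*}
v_E=\frac{E-v\langle f\rangle}{E\langle f\rangle-v\langle f^2\rangle}
\end{equation*}
has already been established for $d>1$ in the preceding discussion (using the operator $K=-\partial_r^2+2Evf-v^2 f^2+Q/r^2$ on $H^2(\mathbb{R}^d)$, together with the fact that $Q/r^2$ is $E$-independent and so drops out of $K_E$). The whole problem therefore reduces to a sign analysis of $v_E$.

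First I would establish the radial analogue of \textbf{Lemma II.1}. Multiplying equation (\ref{KG4}) by $\varphi$ and integrating by parts on $(0,\infty)$ gives
\begin{equation*}
-\int_0^\infty (\varphi')^2\,dr \;-\; Q\int_0^\infty \frac{\varphi^2}{r^2}\,dr \;=\; m^2-E^2+2Ev\langle f\rangle-v^2\langle f^2\rangle,
\end{equation*}
with the boundary contributions vanishing because $\varphi(0)=\lim_{r\to\infty}\varphi(r)=0$. Since $|E|<m$ and $Q\geq 0$ for every admissible $(l,d)$, the left side is strictly negative, which immediately yields $2E\langle f\rangle<v\langle f^2\rangle$, exactly as in (\ref{mineq}). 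With this in hand I would pin down the sign of the denominator: the hypotheses on $f$ give $\langle f\rangle<0$ and $\langle f^2\rangle>0$, so for $E\geq 0$ the denominator $E\langle f\rangle-v\langle f^2\rangle$ is a sum of a non-positive term and a strictly negative term and hence is negative, while for $E<0$ the extended lemma gives $E\langle f\rangle<2E\langle f\rangle<v\langle f^2\rangle$, so the denominator is again negative throughout $|E|<m$. Consequently the sign of $v_E$ is governed entirely by the numerator $E-v\langle f\rangle$, and the two cases of the theorem correspond precisely to this numerator being non-negative or strictly negative. Integrating $v_E$ across $[E_1,E_2]$ then yields the stated monotonicity of $v=G(E)$ in each regime.

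The only mild obstacle is the technical one of justifying the integration by parts on $(0,\infty)$ and the orthogonality relation $\langle\varphi,\varphi_E\rangle=0$ coming from differentiating the normalization condition in the radial framework; both rest on the standard Frobenius behaviour $\varphi(r)\sim r^{(2l+d-1)/2}$ at the origin (which tames $Q\int \varphi^2/r^2\,dr$) and on the exponential decay at infinity guaranteed by $|E|<m$. With these routine points secured, the argument is a direct transcription of the one-dimensional proof.
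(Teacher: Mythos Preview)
Your approach is exactly the paper's, whose proof is literally ``Same as \textbf{Theorem II.1}'': use the identical formula for $v_E$, establish the radial version of Lemma~II.1 to force the denominator $E\langle f\rangle-v\langle f^2\rangle$ negative for all $|E|<m$, and then let the numerator $E-v\langle f\rangle$ dictate the sign of $v_E$.

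Two minor remarks. First, your assertion that $Q\geq 0$ for every admissible $(l,d)$ fails at $d=2$, $l=0$, where $Q=-\tfrac14$; the inequality $2E\langle f\rangle<v\langle f^2\rangle$ still goes through there because the one-dimensional Hardy inequality $\int_0^\infty(\varphi')^2\,dr\geq\tfrac14\int_0^\infty r^{-2}\varphi^2\,dr$ absorbs the offending term, but your bare sign argument does not cover it (the paper glosses over this point as well). Second, your own sign analysis actually gives $v_E\leq 0$ when $E\geq v\langle f\rangle$ and $v_E>0$ when $E<v\langle f\rangle$, i.e.\ the monotonicity directions of Theorem~II.1; these are the \emph{reverse} of the inequalities printed in the present statement, which appears to carry a typographical sign flip relative to Theorem~II.1. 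So ``yields the stated monotonicity'' is not quite accurate---your argument yields the correct monotonicity, which is that of Theorem~II.1.
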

\begin{proof}
Same as {\bf Theorem II.1}.
\end{proof}
\begin{thm}
The spectral curve $G(E)$ is concave for all $|E| < m$
\end{thm}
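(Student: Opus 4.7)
The plan is to repeat the argument of Theorem II.2 in the radial setting, since all the ingredients transfer. The discussion preceding Theorem II.3 has already shown that the key formula $v_E = (E - v\langle f\rangle)/(E\langle f\rangle - v\langle f^2\rangle)$ holds for equation (\ref{KG4}); and the author has remarked that Lemma II.1 continues to hold in $d$ dimensions because the centrifugal term $Q/r^2$ is $E$-independent. Theorem II.3 then supplies a maximum point $E_{cr}$ of $G$ at which $v_E = 0$, i.e.\ $E_{cr} = v_{cr}\langle f_{cr}\rangle$.

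First I would fix an arbitrary second point $B(E_n, v_n)$ on the spectral curve with $E_n \in (-m, m)$ and $E_n \neq E_{cr}$, and, assuming without loss of generality $E_{cr} < E_n$, form the chord $G_c$ joining $A(E_{cr}, v_{cr})$ to $B$ exactly as in the proof of Theorem II.2. Setting $t(E) := G(E) - G_c(E)$, the endpoint values are $t(E_{cr}) = t(E_n) = 0$.

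Next I would compute $t'(E)$ and locate its interior zero $E_m$. Because $(Q/r^2)_E = 0$ the algebra is identical to the one-dimensional computation and reproduces (\ref{E_m}),
$$E_m = v_m\langle f_m\rangle + \frac{v_m(v_{cr} - v_n)\big(\langle f_m^2\rangle - \langle f_m\rangle^2\big)}{E_n - v_{cr}\langle f_{cr}\rangle + \langle f_m\rangle(v_{cr} - v_n)}.$$
The Cauchy--Schwarz inequality $\langle f_m^2\rangle \geq \langle f_m\rangle^2$, together with $t'(E_{cr}) > 0$ and the boundary conditions $t(E_{cr}) = t(E_n) = 0$, then forces $E_m$ to be a strict maximum of $t$ on $[E_{cr}, E_n]$, so $t \geq 0$ there. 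Hence the chord $AB$ lies beneath the curve $G$, which is the definition of concavity. The case $E_n < E_{cr}$ is treated symmetrically.

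The only step that is not a line-by-line transcription from the one-dimensional case is the integration-by-parts identity $\langle \varphi, K\varphi_E\rangle = \langle K\varphi, \varphi_E\rangle$ needed to kill the middle term of (\ref{diff}), because the inner product is now taken over $[0, \infty)$ rather than $\mathbb{R}$ and the operator $K$ contains the singular term $Q/r^2$. I would dispose of the boundary contributions at $r = 0$ and $r = \infty$ using the cited conditions $\varphi(0) = 0$ and $\lim_{r\to\infty}\varphi = 0$, and note that the centrifugal term is self-adjoint and $E$-independent so it contributes to neither side of the identity nor to $K_E$. I expect this to be the sole technical point that needs genuine checking; once it is in place, the concavity argument runs on autopilot.
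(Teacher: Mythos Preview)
Your proposal is correct and takes essentially the same approach as the paper: the paper's own proof of this theorem is literally ``Same as \textbf{Theorem II.2}'', and you have spelled out exactly that transcription, together with the (reasonable) observation that the only point needing care is the symmetry of $K$ on $[0,\infty)$ with the centrifugal term present. Your added remark on Cauchy--Schwarz and the boundary behaviour at $r=0,\infty$ is more detail than the paper itself provides, but the underlying argument is the same chord-vs-curve computation anchored at the maximum $E_{cr}$.
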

\begin{proof}
Same as {\bf Theorem II.2}
\end{proof}
\noindent We observe that as in the one-dimensional case, this theorem does not require the radial wave function to be node-free; it's valid for both ground and excited states.

\begin{figure}[htbp]
\centering
\includegraphics[scale=0.35]{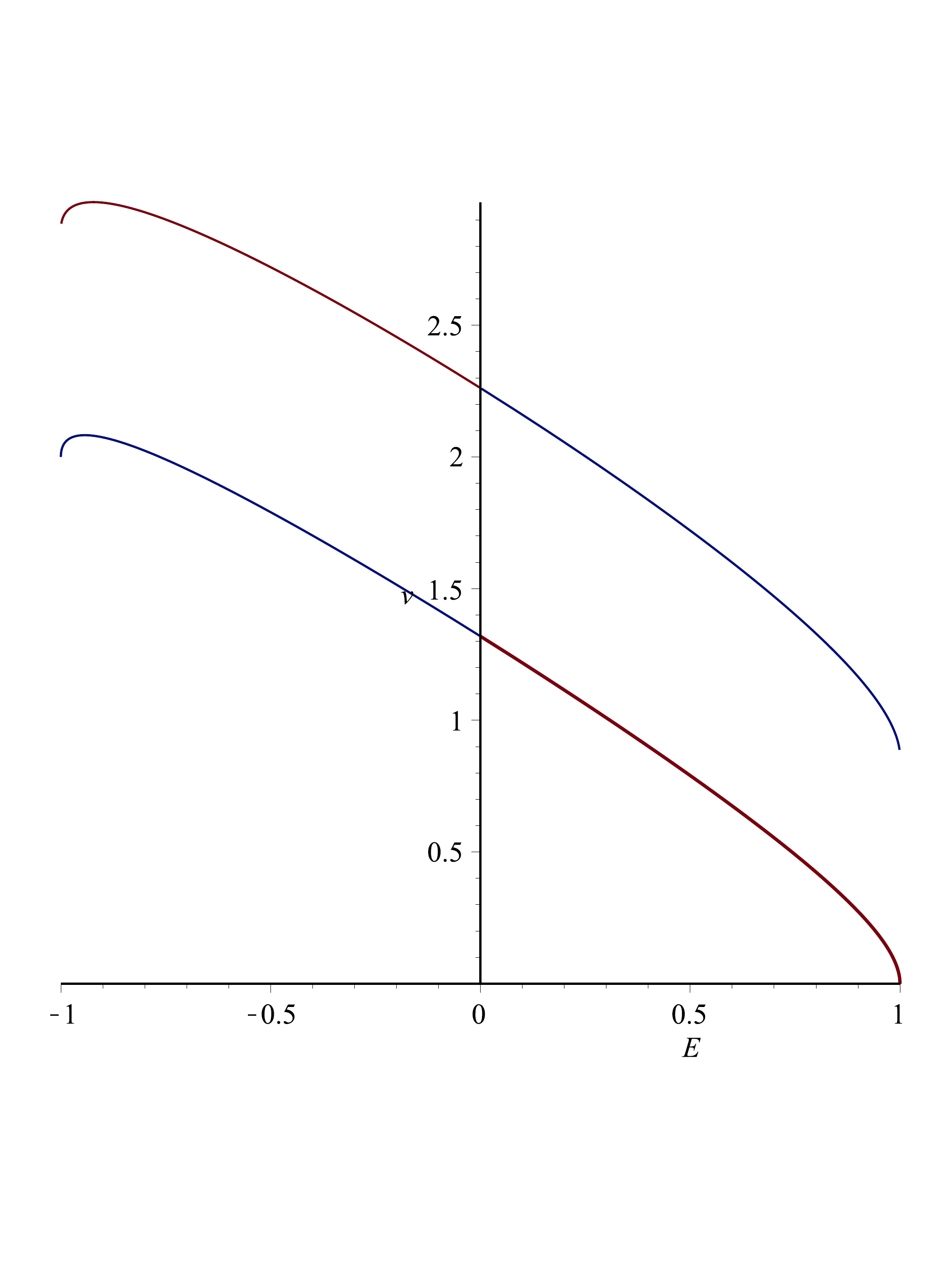}
\caption{Graphs of the function $v(E)$ in the square-well potential, for the ground state and the first excited state. We exhibit an explicit expression in section 5. }
\end{figure}

\begin{figure}
\centering
\includegraphics[scale=0.35]{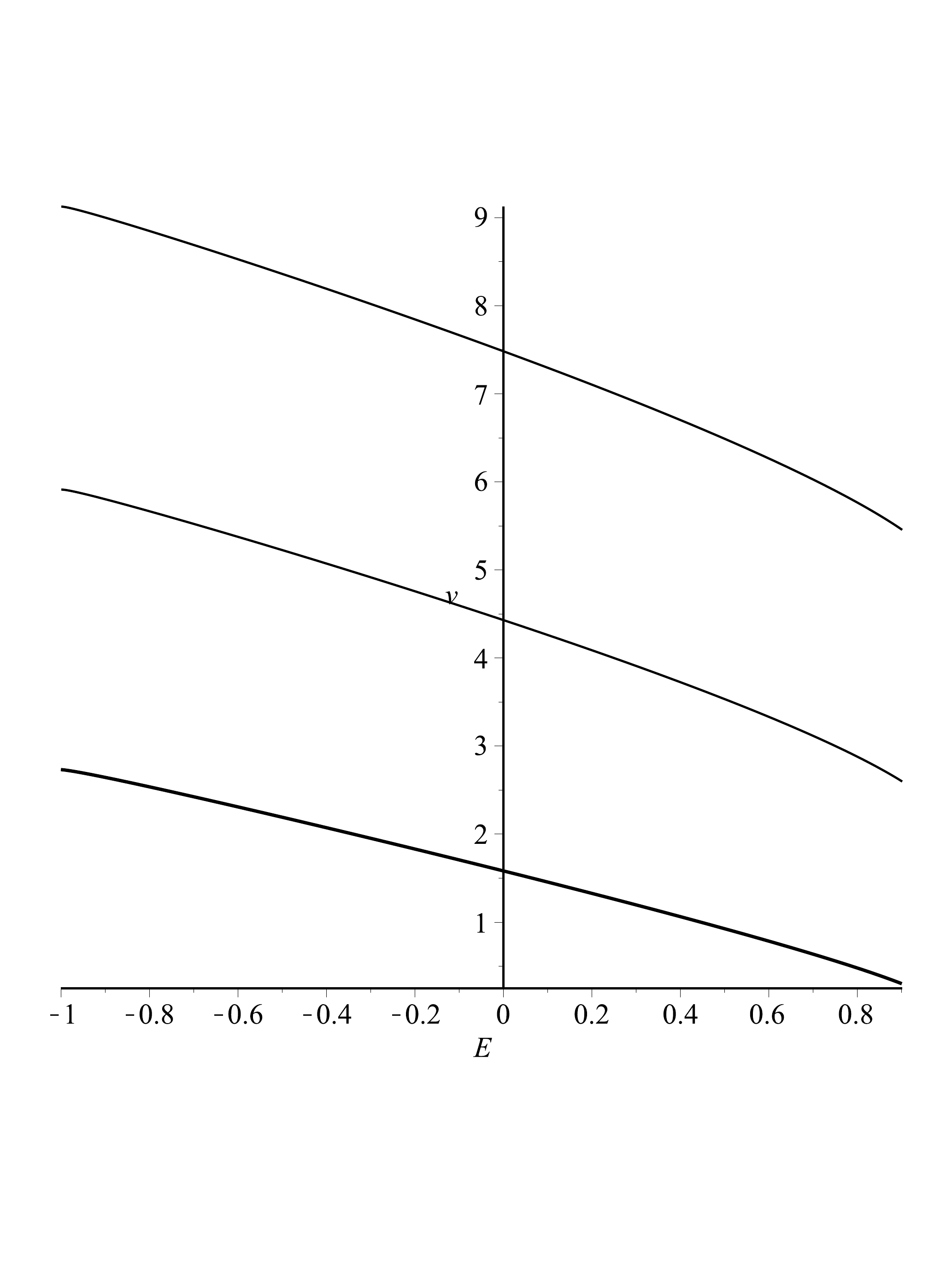}
\caption{Graphs of the function $v(E)$ in the Woods-Saxon potential shape $f(x) = -\dfrac{1}{1 + e^{b(|x|-1)}}$ with $b = \frac{20}{7}$, for the ground state, the first excited state, and the second excited state. The graphs are plotted using a numerical shooting method of our own.}
\end{figure}
\newpage
\section{Exact solution for the Klein--Gordon equation with the square-well potential}
\begin{enumerate}
\item {\bf One dimensional case}:
Consider the Klein--Gordon equation in dimension $d = 1$:     
$\varphi^{\prime\prime} (x)= [m^2 - \big(E - g(x,t)\big)^2]\varphi(x)$, and the square-well potential
\[ g(r,t) = 
\begin{cases}
           -v_0, &  |x| \leq t  \\  
           0, & {\rm elsewhere}
      \end{cases}
,\]
where $v_0>0$. For $x < -t$, we get: $\varphi^{\prime\prime}(x) = (m^2 - E^2)\varphi(x)$. Thus, $\varphi (x)= Ae^{-kx} + Be^{kx}$ with $k^2 = m^2 - E^2$. Since $\varphi$ vanishes at $-\infty$, then $A = 0$ and $\varphi (x)=  Be^{kx}$. Similarly, for $x > t$ we obtain $\varphi (x)=  Ce^{-kx}$. For $|x|\leq t$, $\varphi^{\prime\prime}(x) + w^2\varphi (x)= 0$ with $w = \sqrt{(E+v_0)^2 - m^2}$. Then $\varphi(x) = D \sin (wx) + E \cos (wx).$ Since,as shown in section II-A, all the solutions are either even or odd, then the even solution is 
\[ \varphi(x) = 
\begin{cases}
            Be^{kx}, & x < -t \\    
            E\cos(wx), &  |x| \leq t  \\  
            Ce^{-kx}, & x > t
      \end{cases}
,\]
and the odd solution is
\[ \varphi(x) = 
\begin{cases}
            Be^{kx}, & x < -t  \\   
            D\sin(wx), &  |x| \leq t  \\  
            Ce^{-kx}, & x > t
      \end{cases}
.\]
Regarding the even solution, since $\varphi$ is required  to be continuously differentiable at $t$, then
\begin{eqnarray}\label{EQC}
E\cos (wt) = Ce^{-kt},
\end{eqnarray}
and
\begin{eqnarray}\label{EQD}
-Ew\sin (wt) = -Cke^{-kt}.
\end{eqnarray}
Dividing equation (\ref{EQC}) by (\ref{EQD}), we obtain the eigenvalue equation
\begin{eqnarray}\label{SolE}
w\tan{wt} = k.
\end{eqnarray}
Similarly, the eigenvalue equation for the odd states reads
\begin{eqnarray}\label{SolO}
w\cot(wt) = -k.
\end{eqnarray} 
These equations allow us to compute the eigenvalue $v_0$ given the energy $E$.
\item {\bf $d>1$ dimensional cases}:
The radial part of the Klein--Gordon equation reads \cite{rtd}    
\begin{eqnarray}\label{kd>1}
R^{\prime\prime}(r) +\frac{d-1}{r} R^{\prime} (r)= \bigg[m^2-\big(E-g(r,t)\big)^2+\frac{l(l+d-2)}{r^2}\bigg]R(r),
\end{eqnarray}
 where $l = 0, 1, 2,...$ and 
\[ g(r,t) = 
\begin{cases}
           -v_0, &  r \leq t   \\ 
           0, & {\rm elsewhere}
      \end{cases}
\] 
with $v_0 > 0$. For $d = 3$, the eigenvalue equation is \cite{grnr}:
\begin{eqnarray}\label{sphbessel}
\frac{j^\prime_l(k_it)}{j_l(k_it)} = \frac{h^{(1)\prime}_l(ikt)}{h^{(1)}_l(ikt)},
\end{eqnarray}
where $k_i^2 = (E + v_0)^2 - m^2$, $k^2 = m^2 - E^2$, $i^2 = -1$, $j_l$ is the spherical Bessel function of the first kind, and $h_l^{(1)}$ is the Hankel function of the first kind. In particular, the eigenvalue equation for the $s$-states $(l = 0)$ is \cite{grnr}:
\begin{eqnarray*}
k_i\cot(k_it) = -k.
\end{eqnarray*}
To generalize for any $d$-dimensional case, we consider the reduced form of the radial part of the Klein--Gordon equation (\ref{KG4}). For $r < t$, we write it as 
\begin{eqnarray*}
r^2\varphi^{\prime\prime} + [k_i^2r^2 - Q]\varphi(r) = 0.
\end{eqnarray*}
 Changing the variable $r$ into $\sigma = k_ir$ we obtain the following differential equation:
\begin{eqnarray*}
\sigma^2\varphi^{\prime\prime}(\sigma) + [\sigma^2 - \nu(\nu + 1)]\varphi(\sigma) = 0,
\end{eqnarray*}
where $\nu = \frac{2l + d - 3}{2}$. This is the Ricatti-Bessel equation with solution 
$\varphi(\sigma) = C_1\sigma j_{\nu}(\sigma) + C_2\sigma y_{\nu}(\sigma)$ \cite{abw}, where $y_{\nu}$ is the spherical Bessel function of the second kind. Since we have an irregular point at $\sigma = 0$, then $C_2 = 0$. For $r > t$ we obtain the differential equation
\begin{eqnarray*}
r^2\varphi^{\prime\prime}(r) + [-k^2r^2 - Q]\varphi(r) = 0.
\end{eqnarray*}
Using the change of variable $\sigma = ikr$ we obtain
\begin{eqnarray*}
\sigma^2\varphi^{\prime\prime}(\sigma) + [\sigma^2 - \nu(\nu + 1)]\varphi(\sigma) = 0,
\end{eqnarray*}
whose general solution is \cite{abw} $\varphi(\sigma) = W_1\sigma h^{(1)}(\sigma) + W_2\sigma h^{(2)}(\sigma)$, where $h^{(2)}$ is the Hankel function of the second kind. Since $\varphi\in L^2(\mathbb{R})$, then $W_2 = 0$. Since $\varphi$ is continuously differentiable at $r = t$, then the corresponding eigenvalue equation is  
\begin{eqnarray}\label{ricbessel}
\frac{[(k_it) j_l(k_it)]^{\prime}}{(k_it) j_l(k_it)} = \frac{[(ikt)h^{(1)}_l(ikt)]^{\prime}}{(ikt) h^{(1)}_l(ikt)}.
\end{eqnarray}
\end{enumerate}

\begin{center}
\section{Comparison theorems for pairs of potential functions with different potential shapes}
\end{center}
\subsection{$d=1$~dimensional case}
Consider the Klein--Gordon equation in one dimension
\begin{eqnarray}\label{KKG}
\varphi^{\prime\prime}(x)=[m^2-(E-V(x))^2]\varphi(x),
\end{eqnarray}
where natural units $\hbar = c = 1$ are used, and $E$ is the energy of a spinless particle of mass $m$. We assume that $V(x)=v f(x)$ with the same conditions in {\bf Section 0.2.1}, that is:
\begin{enumerate}
\item  $V = v f$, where $v>0$ is the coupling parameter and $f$ is the potential shape of $V$;
\item  $V$ is an even function, that is $V(x) = V(-x)$;
\item  $V$ is not identically zero and a non-positive function, i.e $V\leq 0$;
\item  $f$ is attractive, that is $f$ is monotone non-decreasing over $[0,\infty)$;
\item  $f$ vanishes at infinity, i.e $\displaystyle\lim_{x\to \pm \infty} f(x)= 0$
\end{enumerate}
By similar reasoning in section {\bf Section 0.2.1}, we have $|E| < m$, and all the solutions of equation (\ref{KKG}) are either even or odd functions. 
We also assume that the wave function in this section satisfies the normalization condition, i.e,
\begin{eqnarray}\label{norm1}
||\varphi||^2 = \int_{-\infty}^{\infty}\varphi^2(x) dx = 1.
\end{eqnarray}
In this section, we consider the parameter $a\in [0,1]$ and the two potential shapes $f_1$ and $f_2$ with $f = f(a,x) = f_1(x) + a\big[f_2(x) - f_1(x)\big]$, where $f_1\leq f_2 \leq 0$.
Hence $f\leq 0$, attractive, even, vanishes at infinity, $f(0,x) = f_1(x)$ when $a=0$, and  $a = 1$ when $f(1,x) = f_2(x)$,
and
\begin{eqnarray}\label{rel}
\frac{\partial f}{\partial a} = f_2(x) -f_1(x)\geq 0.
\end{eqnarray}
 Hence, $f$ is monotone non-decreasing in the parameter $a$.
The idea in this section is to study the variations of the coupling $v$ with respect to $a$, provided $v = v(a)$ and the value of $E$ is given as a constant, that is $\frac{\partial E}{\partial a} = 0$, and $-m < E < m$.
We again consider the symmetric bounded operator $K$ in (\ref{op}), and we define $\varphi_a$ to be the partial derivative of $\varphi$ with respect to $a$.
Differentiating equation (\ref{KG3}) with respect to $a$ we get:
\begin{eqnarray}\label{diff1}
\langle\varphi_a,K\varphi\rangle + \langle\varphi,K_a\varphi\rangle + \langle\varphi,K\varphi_a\rangle = 0
\end{eqnarray}
Applying the partial derivative with respect to $a$ to equation (\ref{norm1}) and using the symmetry of $K$, we obtain the new orthogonality relation
\begin{eqnarray*}
\langle\varphi_a,K\varphi\rangle = \langle\varphi, K\varphi_a\rangle = (E^2-m^2)\langle\varphi_a,\varphi\rangle = 0.
\end{eqnarray*}
We also have:
\begin{eqnarray*}
K_a = 2Ev_af + 2Ev(f_2 - f_1) - 2vv_af^2 - 2v^2f(f_2-f_1),
\end{eqnarray*}
with $v_a$ defined as $\frac{\partial v}{\partial a}$.
Equation (\ref{diff1}) becomes:
\begin{eqnarray*}
Ev_a\langle f\rangle + Ev\int_{-\infty}^{\infty}\bigg(f_2(x) - f_1(x)\bigg)\varphi^2(x)dx - vv_a\langle f^2\rangle - v^2\int_{-\infty}^{\infty}f\bigg(f_2(x) - f_1(x)\bigg)\varphi^2(x) dx= 0.
\end{eqnarray*}
This leads us to the following relation:
\begin{eqnarray}\label{rel1}
v_a = \dfrac{v I}{E\langle f\rangle - v\langle f^2\rangle},
\end{eqnarray}
where
\begin{eqnarray}\label{I}
I =\displaystyle\int_{-\infty}^{\infty}\bigg(f_2(x) - f_1(x)\bigg) \bigg(vf(x) - E\bigg)\varphi^2(x) dx.
\end{eqnarray}
 In the next two lemmas, we shall use the parity of $\varphi$ and study the sign of $\varphi^{\prime\prime}$ on the interval $[0,\infty)$.
\medskip
\begin{lemma}
If $\varphi$ is the node-free (ground) state, then $\varphi^{\prime\prime}$ changes its sign only once over $[0,\infty)$.
\end{lemma}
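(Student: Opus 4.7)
The plan is to reduce the problem to a sign analysis of the coefficient function $P(x):=m^{2}-\bigl(E-vf(x)\bigr)^{2}$ that appears in the Klein--Gordon equation $\varphi''=P\,\varphi$. Since $\varphi$ is the node-free ground state I may take $\varphi>0$ everywhere, so that $\mathrm{sgn}(\varphi'')=\mathrm{sgn}(P)$ on $[0,\infty)$. It therefore suffices to prove that $P$ changes sign exactly once on $[0,\infty)$.

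First I would factor
\[P(x)=\bigl(m-h(x)\bigr)\bigl(m+h(x)\bigr),\qquad h(x):=E-vf(x),\]
and dispose of the second factor immediately. Condition (3) gives $f\le 0$, so $-vf(x)\ge 0$; combined with $E>-m$ from (\ref{cond}) this yields $m+h(x)\ge m+E>0$ for every $x$. Consequently $\mathrm{sgn}(P)=\mathrm{sgn}(m-h)$, and the whole question reduces to the single scalar function $m-h(x)=(m-E)+vf(x)$.

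Next I would read off its monotonicity and boundary behaviour from the hypotheses on $f$. By condition (4), $f$ is non-decreasing on $[0,\infty)$, hence so is $m-h$. Condition (5) gives $m-h(\infty)=m-E>0$. It then remains to show $m-h(0)<0$, which I would prove by contradiction using the bound-state structure: if $m-h\ge 0$ on all of $[0,\infty)$, then $P\ge 0$ and $\varphi''=P\,\varphi\ge 0$ there; since the ground state of a symmetric potential is even, $\varphi'(0)=0$, so $\varphi''\ge 0$ on $[0,\infty)$ forces $\varphi'\ge 0$ and hence $\varphi$ non-decreasing on $[0,\infty)$, contradicting $\varphi(\infty)=0<\varphi(0)$. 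Therefore $m-h(0)<0$, and because $m-h$ is non-decreasing while passing from a strictly negative value at $x=0$ to a strictly positive value at infinity, it has a single transition from negative to non-negative. The same statement then transfers directly to $P$ and hence to $\varphi''$.

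The only genuinely subtle point is the exclusion of the case $m-h(0)\ge 0$: this step is not purely algebraic, but invokes the bound-state structure (parity of the even ground state plus its decay at infinity) to ensure that at least one sign change occurs. Everything else is an immediate consequence of the factorization of $P$, the monotonicity of $f$, and the bound $|E|<m$ established earlier in the paper.
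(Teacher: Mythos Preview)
Your proof is correct and follows essentially the same route as the paper's: both reduce to showing that the coefficient $P(x)=m^{2}-(E-V(x))^{2}$ has exactly one sign change on $[0,\infty)$, using the monotonicity of $V$ for ``at most one'' and the even parity $\varphi'(0)=0$ together with decay at infinity for ``at least one''. Your explicit factorization $P=(m-h)(m+h)$ streamlines the presentation and absorbs the paper's separate bounded/unbounded-near-$0$ subcases into a single argument, but the substance is identical.
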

\begin{proof}
Let $\varphi^{\prime\prime}(x) = 0$. Then from equation (\ref{KKG}) we get $m^2 - \big(E - V(x)\big)^2 = 0$, which means that $V(x) = E - m$ or $V(x) = E + m$.
Since $|E| < m$ and $V(x)\leq 0$, then $V\neq E+m$. Hence, $V(x) = E - m$ and $\varphi^{\prime\prime}(x) = 0 \Longleftrightarrow x =V^{-1}( E - m)$, where $V^{-1}$ is the inverse of the monotone function $V$.
\begin{enumerate}
\item {\bf $V$ is unbounded near $0$:}
Since $V$ is unbounded near $0$, then $\varphi^{\prime\prime} <0$ near $0$, and since $V$ vanishes at $\infty$, equation (\ref{KKG}) becomes $\varphi^{\prime\prime} = (m^2 - E^2)\varphi > 0$.
Hence,  $\varphi$ is concave on $\big[0,V^{-1}(E - m)\big)$ and convex on $\big(V^{-1}(E - m),\infty\big)$.
\item {\bf $V$ is bounded; that is: $V_0\leq V\leq 0$:}
    
Since $\varphi$ is an even state, then $\varphi^{\prime}(0) = 0$, which means that $y = \varphi(0)$ is an equation of the tangent line to $\varphi$ at $x = 0$.
    
If $\varphi$ is convex near $0$, then $\varphi^{\prime\prime}$ must change it sign at some $x_1\in [0,\infty)$ since we know that $\varphi$ vanishes near $\infty$. However, equation (\ref{KKG}) becomes $\varphi^{\prime\prime} = (m^2 - E^2)\varphi > 0$ near $\infty$, which means that $\varphi$ is convex near $\infty$. Thus $\varphi^{\prime\prime}$ should again change its sign at some $x_2\in [x_1,\infty).$
This means that $\varphi$ has two inflection points on $[0,\infty)$, which is a contradiction. 
Hence, $\varphi$ is concave on $\big[0,V^{-1}(E - m)\big)$ and convex on $\big(V^{-1}(E - m),\infty\big)$.
\end{enumerate}
\end{proof}
\medskip
\begin{lemma}
$\varphi^{\prime\prime}$changes its sign at least once over $[0,\infty)$, for any excited state $\varphi$.
\end{lemma}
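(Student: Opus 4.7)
The strategy parallels Lemma IV.1: I partition $[0,\infty)$ at the transition point $x^{*}=V^{-1}(E-m)$, where the coefficient $m^{2}-(E-V(x))^{2}$ of (\ref{KKG}) switches sign (negative for $x<x^{*}$, positive for $x>x^{*}$), and then exploit the nodes that every excited state carries. I would first introduce $x_{N}\ge 0$, the rightmost zero of $\varphi$ in $[0,\infty)$. For an excited state this is always available: if $\varphi$ is even, its nodes come in symmetric pairs so at least one lies in $(0,\infty)$; if $\varphi$ is odd, then $\varphi(0)=0$ provides at least $x_{N}=0$. Fixing signs without loss of generality gives $\varphi>0$ on $(x_{N},\infty)$, with $\varphi(x)\to 0$ as $x\to\infty$.

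The central step is to show $x_{N}<x^{*}$. I would proceed by contradiction: suppose $x_{N}\ge x^{*}$. The monotonicity of $V$ on $[0,\infty)$ yields $V(x)\ge V(x^{*})=E-m$ for every $x\ge x_{N}$, so (\ref{KKG}) gives $\varphi''(x)\ge 0$ on $(x_{N},\infty)$; i.e., $\varphi$ is convex there. But the three conditions $\varphi(x_{N})=0$, $\varphi>0$ on $(x_{N},\infty)$, and $\varphi\to 0$ at infinity are jointly incompatible with convexity. A clean way to see this is to inspect $\varphi'(x_{N}^{+})$, using that convexity forces $\varphi'$ to be non-decreasing on $[x_{N},\infty)$: if $\varphi'(x_{N}^{+})>0$, then $\varphi(x)\ge \varphi'(x_{N}^{+})(x-x_{N})\to\infty$, contradicting $\varphi\to 0$; if $\varphi'(x_{N}^{+})=0$, then $\varphi$ is non-decreasing and squeezed between $0$ and $0$, so $\varphi\equiv 0$, contradicting $\varphi>0$; if $\varphi'(x_{N}^{+})<0$, then $\varphi<0$ just past $x_{N}$, once more a contradiction.

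Granted $x_{N}<x^{*}$, on the nonempty interval $(x_{N},x^{*})$ one has $V(x)<E-m$, the coefficient is strictly negative, and $\varphi>0$, so $\varphi''<0$. Near infinity (\ref{KKG}) reduces to $\varphi''=(m^{2}-E^{2})\varphi>0$. Hence $\varphi''$ attains both strictly positive and strictly negative values on $[0,\infty)$ and must change sign, which completes the argument.

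The main obstacle I anticipate is the convexity step used to rule out $x_{N}\ge x^{*}$: since the usual ``below the secant'' inequality for convex functions is phrased for bounded intervals, I would replace it by the monotonicity-of-$\varphi'$ argument sketched above, which is robust on the unbounded half-line. Minor points to verify along the way are the existence of $x^{*}$ in $[0,\infty)$ for a genuine bound state (implicit in Section IV's use of $V^{-1}(E-m)$), the case of $V$ unbounded near the origin (for which $x^{*}>0$ automatically), and the possibility that $V$ is only non-strictly monotone (benign, since only the inequality $V(x)\ge V(x_{N})\ge E-m$ is used).
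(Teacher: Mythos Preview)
Your argument is correct and follows essentially the same route as the paper: both isolate the half-line beyond the last node, use that $\varphi$ starts at zero, stays of one sign, and decays to zero, and conclude that $\varphi''$ must be negative somewhere there while being positive near infinity. Your version is in fact tidier than the paper's, which splits into separate ``one node'' and ``$n\ge 2$ nodes'' cases and argues more informally that ``$\varphi$ should attain a maximum \dots and thus $\varphi''<0$''; your convexity contradiction (via monotonicity of $\varphi'$) is a cleaner way to obtain the same conclusion and simultaneously pins down $x_N<x^{*}$, which the paper only asserts.
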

 \begin{proof}Using the parity of $\varphi$, it is sufficient to study the sign of $\varphi^{\prime\prime}$ on the interval $[0,\infty)$.
If $V$ is unbounded near $0$, then $\varphi^{\prime\prime} < 0$ near $0$ and $\varphi^{\prime\prime} > 0$ near $\infty$.
If $V$ is bounded; that is $V_0\leq V\leq 0$ where $V_0 = V(0)$, then we divide the proof into the following two cases:
\begin{enumerate}
\item {\bf $\varphi$ has only one node:}
Suppose that $\varphi$ has one node $\alpha$, then $\varphi^{\prime\prime}(x) = 0\Longleftrightarrow x = \alpha$ or $x = V^{-1}(E - m)$.
If $\varphi(x) > 0$ for $x > \alpha$, then $\varphi$ should attain a maximum value since it vanishes near $\infty$, and thus $\varphi^{\prime\prime} < 0$. However, by the same condition that $\varphi$ vanishes near $\infty$, $\varphi^{\prime\prime}$ should change its sign one more time. This means that $V^{-1} (E - m)\in\big(\alpha,\infty)$.
Therefore
\begin{enumerate}
\item if $\varphi(x) > 0$ for $x >\alpha$, then $\varphi^{\prime\prime}(x) < 0$ for $x\in \big(\alpha, V^{-1}(E - m)\big)$, and $\varphi^{\prime\prime}(x) > 0$ for $x\in(0,\alpha)\cup \big(V^{-1}(E - m),\infty\big);$
\item  if $\varphi(x) < 0$ for $x>\alpha$, then $\varphi^{\prime\prime}(x) > 0$ for $x\in \big(\alpha, V^{-1}(E - m)\big)$, and $\varphi^{\prime\prime}(x) < 0$ for $x\in(0,\alpha)\cup \big(V^{-1}(E - m),\infty\big).$
\end{enumerate}
\item{\bf $\varphi$ has $n$ nodes, $n\geq 2$:}
Suppose that $\varphi$ has $n$ nodes, $x= \alpha_1, \alpha_2, ..., \alpha_n$, $n\geq 2$ . Then
\begin{center}
$\varphi^{\prime\prime}(x) = 0\Longleftrightarrow m^2 - \big(E - V(x)\big)^2 = 0$ or $\varphi (x)= 0,$
\end{center}
which means that
\begin{center}
$x = \alpha_1, \alpha_2, ..., \alpha_n$ or $V^{-1}(E - m).$
\end{center}
We shall now study the concavity of $\varphi$ over the interval $(\alpha_{n-1},\infty):$
If $\varphi(x) > 0$ on $\big(\alpha_{n-1},\alpha_n\big)$, then $\varphi$ must attain a maximum value at some $x_0\in\big(\alpha_{n-1},\alpha_n\big)$ and $\varphi$ is concave on $\big(\alpha_{n-1},\alpha_n\big)$.
 For $x >\alpha_n$, $\varphi$ changes both its sign and concavity. Thus $\varphi$ becomes convex and negative for $x >\alpha_n$.
However, since $\alpha$ vanishes near $\infty$, then $\varphi^{\prime\prime}$ vanishes and changes its sign one more time somewhere after its last node.
 This implies that $V^{-1}(E - m)\in\big(\alpha_n,\infty\big)$, and therefore $\varphi^{\prime\prime}(x) < 0$ for $x\in\big(\alpha_{n-1},\alpha_n\big)\cup\big(V^{-1}(E - m),\infty\big)$, and $\varphi^{\prime\prime}(x) > 0$ for $x\in\big(\alpha_n,V^{-1}(E - m)\big)$.
By the same reasoning, if $\varphi(x)<0$ on $(\alpha_{n-1},\alpha_n)$, then $\varphi^{\prime\prime} (x)> 0$ for $x\in\big(\alpha_{n-1},\alpha_n\big)\cup\big(V^{-1}(E - m),\infty\big)$, and $\varphi^{\prime\prime}(x) < 0$ for $x\in\big(\alpha_n,V^{-1}(E - m)\big)$.
\end{enumerate}
\end{proof}
\medskip
\begin{lemma}
The integral $I$ defined in relation (\ref{I}) is non-positive for any state $\varphi$.
\end{lemma}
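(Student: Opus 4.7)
My plan is to prove the inequality $I \le 0$ by a case analysis on the sign of the energy $E$, leaning on the sign information for $\varphi''$ established in Lemmas 4.1 and 4.2.

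In the trivial case $E \ge 0$, I note that since $V(x) = vf(x) \le 0 \le E$, the factor $vf(x) - E$ is non-positive for every $x$. Combined with $f_2(x) - f_1(x) \ge 0$ and $\varphi^2(x) \ge 0$, this makes the integrand of $I$ pointwise non-positive, giving $I \le 0$ immediately.

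The substantive case is $E < 0$. Here the factor $vf - E$ changes sign at the point $x_0 := V^{-1}(E) > 0$: it is negative on $[0, x_0)$ (where $V < E$) and positive on $(x_0, \infty)$ (where $V > E$), so $I$ decomposes into a negative contribution from $[-x_0, x_0]$ and a positive tail from $|x| > x_0$, and no pointwise argument is available. My idea is to use the Klein--Gordon equation $\varphi'' = [m^2 - (E - V)^2]\varphi$ to trade the sign-changing factor $V - E$ for quantities whose sign on $[0, \infty)$ is already pinned down by Lemmas 4.1 and 4.2. Recall that these give the sign change of $\varphi''$ at $V^{-1}(E - m) < x_0$ (and at the nodes of $\varphi$ for excited states), which is exactly the structural information one expects to need in order to control the positive tail contribution from $(x_0, \infty)$.

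Concretely, I plan to exploit the derivative identity
\[
\tfrac{d}{dx}\big[(\varphi')^2 - A\varphi^2\big] \;=\; 2(V - E)\,V'\,\varphi^2, \qquad A := m^2 - (E - V)^2,
\]
which follows directly from the Klein--Gordon equation. Multiplying by $\Delta := f_2 - f_1$ and integrating by parts, using the vanishing of $\varphi$, $\varphi'$, and $\Delta$ at $\pm\infty$, I obtain an integral identity in which $I$ itself appears up to terms containing $\Delta'$, $(\varphi')^2$, and $\varphi\varphi''$. Applying the sign analysis of Lemmas 4.1 and 4.2 interval by interval on $[0, \infty)$---the intervals being those separated by the nodes of $\varphi$ and by $V^{-1}(E - m)$---and using that $\varphi^2$ is even so the analysis on $[0,\infty)$ suffices, I expect the resulting expression to be manifestly non-positive.

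The main obstacle will be that $\Delta = f_2 - f_1$ is not prescribed to be monotone on $[0, \infty)$: only non-negativity, evenness, and the constraint $\Delta \le -f_1$ (inherited from $f_2 \le 0$) are available. This prevents any direct Chebyshev-type rearrangement argument, and forces a careful case-by-case treatment, dividing along the lines of the two preceding lemmas: ground state (Lemma 4.1, where $\varphi''$ changes sign only once) versus excited state with $n \ge 1$ nodes (Lemma 4.2, with several sign-changes). The decisive cancellation should arise from combining the sign information on $\varphi''$ on each sub-interval with the evenness of $\varphi$ and $\Delta$, recovering $I \le 0$ in both the ground-state and excited-state settings.
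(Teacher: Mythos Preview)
Your proposal diverges substantially from the paper's argument and, as written, has a real gap in the $E<0$ case.

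The paper's proof is a pointwise trick, not an integration-by-parts argument. It rewrites the Klein--Gordon equation as a quadratic in $E$ and solves to obtain
\[
E \;=\; vf(x)\;\pm\;\sqrt{\,m^{2}-\dfrac{\varphi''(x)}{\varphi(x)}\,}\,,
\]
and then invokes Lemmas~4.1/4.2 (the sign change of $\varphi''$) to discard the minus branch, concluding $vf(x)-E=-\sqrt{m^{2}-\varphi''/\varphi}\le 0$ for every $x$. Substituting this into $I$ makes the integrand non-positive pointwise, with no case split on the sign of $E$ and no integration by parts. Your $E\ge 0$ case is consistent with this, but your assertion that for $E<0$ ``no pointwise argument is available'' is exactly what the paper contradicts; the quadratic-in-$E$ rewrite is the missing idea.

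The concrete gap in your plan: your derivative identity yields $2(V-E)\,V'\,\varphi^{2}$, not $(V-E)\,\varphi^{2}$. After multiplying by $\Delta$ and integrating by parts you obtain
\[
2\int \Delta\,(V-E)\,V'\,\varphi^{2}\,dx \;=\; -\int \Delta'\bigl[(\varphi')^{2}-A\,\varphi^{2}\bigr]\,dx,
\]
which carries an extra factor $V'=vf'$ and does \emph{not} contain $I=\int \Delta\,(V-E)\,\varphi^{2}$; your claim that ``$I$ itself appears'' is not borne out. Worse, any attempt to control the right-hand side requires sign information on $\Delta'$, and you correctly note that $\Delta=f_{2}-f_{1}$ need not be monotone; you never resolve this obstacle, so the proposed ``case-by-case treatment'' cannot close. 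The paper's device of eliminating $vf-E$ in favor of $-\sqrt{m^{2}-\varphi''/\varphi}$ is precisely what sidesteps any need for monotonicity of $\Delta$ or for integration by parts.
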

\begin{proof}
We first write equation (\ref{KKG}) as
\begin{eqnarray*}
\big(\varphi(x)\big)E^2 - \big(2vf(x)\varphi(x)\big) E + \big(\varphi^{\prime\prime}(x) - m^2\varphi(x) + v^2f^2(x)\varphi(x)\big )= 0.
\end{eqnarray*}
This is a quadratic equation in $E$, and we have
\begin{center}
$E = vf(x)\pm\dfrac{\sqrt{v^2f^2(x)\varphi^2(x) -\big (\varphi(x)\varphi^{\prime\prime}(x) - m^2\varphi^2(x) + v^2f^2(x)\varphi^2(x)\big)}}{\varphi(x)}$.
\end{center}
Then
\begin{eqnarray}\label{choice 1}
 E = vf(x)- \sqrt{m^2 - \dfrac{\varphi^{\prime\prime}(x)}{\varphi(x)}},
\end{eqnarray}
or
\begin{eqnarray}\label{choice 2}
 E = vf(x)+ \sqrt{m^2 - \dfrac{\varphi^{\prime\prime}(x)}{\varphi(x)}},
\end{eqnarray}
In solution (\ref{choice 1}), $\varphi^{\prime\prime}$ cannot change its sign because if $\varphi^{\prime\prime}(x) < 0$, then 
\begin{eqnarray*}
 vf(x)- \sqrt{m^2 - \dfrac{\varphi^{\prime\prime}(x)}{\varphi(x)}}\leq - \sqrt{m^2 - \dfrac{\varphi^{\prime\prime}(x)}{\varphi(x)}} <-m,
\end{eqnarray*}
and we already know that $|E| < m$.
Hence, since we have shown in {\bf Lemma 0.3.2} that $\varphi^{\prime\prime}$ changes its sign, then $E$ can only take the second solution (\ref{choice 2}).Therefore, the relation (\ref{I}) becomes:
\begin{eqnarray*}
I = \displaystyle\int_{-\infty}^{\infty} -\big(f_2(x) - f_1(x)\big)\sqrt{m^2 - \dfrac{\varphi^{\prime\prime}(x)}{\varphi(x)}}\varphi^2(x) dx\leq 0.
\end{eqnarray*}
\end{proof}
\begin{thm}
\begin{eqnarray*}
f_1(x)\leq f_2(x)\Rightarrow v_1\leq v_2,
\end{eqnarray*}
for all $x\in[0,\infty)$.
\end{thm}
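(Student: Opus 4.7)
My plan is to read off the conclusion from the sign of $v_a$ along the interpolating family $f(a,x) = f_1(x) + a[f_2(x) - f_1(x)]$ already introduced before the statement. Because $f_1$ and $f_2$ are both even, non-positive, attractive, and vanish at infinity, so does $f(a,\cdot)$ for every $a \in [0,1]$. Fixing $E \in (-m,m)$ and letting $v = v(a)$ denote the coupling required to keep $E$ in the discrete spectrum along this family, we have $v(0) = v_1$ and $v(1) = v_2$, so the claim reduces to $v_a \ge 0$ on $[0,1]$.

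The central identity (\ref{rel1}) expresses $v_a$ as
\begin{equation*}
v_a \;=\; \frac{v\,I}{E\langle f\rangle - v\langle f^2\rangle},
\end{equation*}
and I would analyze the sign of each of the three factors separately. The coupling $v>0$ by hypothesis. The denominator is strictly negative for every $|E|<m$: when $E\ge 0$ one has $E\langle f\rangle \le 0 < v\langle f^2\rangle$ directly from $\langle f\rangle \le 0$ and $\langle f^2\rangle > 0$; when $E<0$, both $E\langle f\rangle$ and $2E\langle f\rangle$ are non-negative and the a~priori bound (\ref{mineq}) gives $E\langle f\rangle < 2E\langle f\rangle < v\langle f^2\rangle$, which is the same sign-chase already used in the proof of \textbf{Theorem II.1}. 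Finally, \textbf{Lemma 0.3.3} supplies $I \le 0$. Combining these three facts yields $v_a \ge 0$ on $[0,1]$, and integrating gives $v_2 - v_1 = \int_0^1 v_a\, da \ge 0$, which is the theorem.

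The genuinely difficult step has already been absorbed into \textbf{Lemma 0.3.3}: without the quadratic-in-$E$ rewriting of the Klein--Gordon equation together with the sign-change information for $\varphi''$ provided by \textbf{Lemmas 0.3.1} and \textbf{0.3.2}, the sign of the factor $(vf-E)$ inside $I$ is not accessible by elementary means, because $vf \le 0$ while $E$ may be of either sign. With that lemma in hand, the theorem is essentially a one-line sign-chase. The only residual bookkeeping is the smoothness of $v(a)$ needed to pass from $v_a \ge 0$ to $v(1) \ge v(0)$, which I would justify by analytic perturbation theory applied to the isolated eigenvalue $E^2 - m^2$ of the bounded symmetric operator $K$ defined in (\ref{op}), which depends analytically on $a$ through $f(a,\cdot)$ and $v(a)$.
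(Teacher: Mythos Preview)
Your proof is correct and follows essentially the same route as the paper: use the identity (\ref{rel1}) for $v_a$, show the denominator is negative for all $|E|<m$ by splitting into the cases $E\ge 0$ and $E<0$ (invoking (\ref{mineq}) in the latter), invoke \textbf{Lemma 0.3.3} for $I\le 0$, and conclude $v_a\ge 0$. Your added remarks on integrating $v_a$ over $[0,1]$ and on the smoothness of $v(a)$ via perturbation theory are reasonable elaborations, but the core argument is identical to the paper's.
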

\begin{proof}

Consider the relation (\ref{rel1}).  If $E\geq 0$, then $E\langle f\rangle - v\langle f^2\rangle < 0$, and if $E < 0$,
 then using the relation (\ref{mineq}) we also get the same result.
Thus, the denominator of equation (\ref{rel1}) is negative for all $|E| < m$.   
Since we also proved in {\bf Lemma 0.3.3} that $I\leq 0$, then $v_a\geq 0$ for all $a\in [0,1]$ and $E\in(-m , m)$.    
This result completes the proof of the theorem.
\end{proof}
\subsection{d-dimensional cases ($d > 1$)}
In this section, we use the same reduced Klein--Gordon equation stated in (\ref{KG4}), with $\varphi$ satisfying $\varphi (0) = 0$ and the 
same normalization condition $\int_{0}^{\infty}\varphi^2(r) dr= 1$.    
We assume the same conditions for the potential shape $f$ as in {\bf section 0.3.1}.   
This proof is not valid for the s-states of the $2$-dimensional case, that is to say for $d = 2$ and $l = 0$. 
We shall prove this in the next section.
\begin{lemma}
$\varphi^{\prime\prime}$ changes its sign at least once, for any state $\varphi$.
\end{lemma}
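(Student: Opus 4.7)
The plan is to adapt the strategy of \textbf{Lemma 0.3.2} to the reduced radial equation (\ref{KG4}), exploiting the automatic boundary condition $\varphi(0)=0$ that holds in dimensions $d>1$. I would begin by rewriting equation (\ref{KG4}) as
\begin{equation*}
\varphi''(r)=\bigg[m^2-(E-V(r))^2+\frac{Q}{r^2}\bigg]\varphi(r),
\end{equation*}
and observing that because $V(r)\to 0$ and $Q/r^2\to 0$ as $r\to\infty$, the bracket tends to $m^2-E^2>0$. Hence for all sufficiently large $r$, $\varphi''(r)$ and $\varphi(r)$ must share the same sign.

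Next I would exhibit a point where the sign of $\varphi''$ is opposite to that of $\varphi$. Let $\alpha_n\geq 0$ denote the largest node of $\varphi$ on $[0,\infty)$, with the convention $\alpha_n=0$ if $\varphi$ is node-free on $(0,\infty)$. On the interval $(\alpha_n,\infty)$, $\varphi$ is continuous, has constant sign, is not identically zero, and vanishes at both endpoints. Standard compactness (applying the extreme value theorem on a suitable $[\alpha_n,R]$ on which the sup is attained in the interior) then produces an interior extremum $r^\ast\in(\alpha_n,\infty)$: either a positive maximum, forcing $\varphi''(r^\ast)\leq 0$, or a negative minimum, forcing $\varphi''(r^\ast)\geq 0$. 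In both cases $\varphi(r^\ast)\varphi''(r^\ast)\leq 0$.

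Combining the two observations on $(r^\ast,\infty)\subset(\alpha_n,\infty)$, where $\varphi$ keeps a constant nonzero sign, gives $\varphi(r^\ast)\varphi''(r^\ast)\leq 0$ at the left endpoint and $\varphi(r)\varphi''(r)>0$ for all large $r$; by continuity of $\varphi''$, it must vanish and flip sign at some intermediate point, proving the lemma. The main obstacle I anticipate is the excluded case $d=2,\ l=0$, for which $Q=-\tfrac14<0$ and the centrifugal term becomes attractive and strongly singular at the origin: the asymptotic argument at infinity is unaffected by the sign of $Q$, but the behavior of $\varphi$ near $r=0$ and the validity of the extremum analysis may require a separate treatment. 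Since the paper defers this case to the next section, under the standing hypothesis $Q\geq 0$ the sketch above suffices, and I would only need to be careful that the argument handles ground and excited states uniformly via the choice of the largest node $\alpha_n$.
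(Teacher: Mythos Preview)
Your proposal is correct and follows essentially the same strategy as the paper: both exploit the asymptotic positivity of the bracket $m^2-(E-V)^2+Q/r^2$ at infinity together with the existence of an interior extremum of $\varphi$ on the interval beyond its last node. Your version is in fact tidier, handling the node-free and excited cases uniformly via the convention $\alpha_n=0$ (using $\varphi(0)=0$), whereas the paper carries out a separate and somewhat redundant case analysis according to the number of nodes.
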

\begin{proof}:
\begin{enumerate}
\item{\bf $\varphi$ is a node-free state:}  $\varphi^{\prime\prime}(r) = 0\Longleftrightarrow m^2 - \big(E-V(r)\big)^2 + \frac{Q}{r^2} = 0$, and near $\infty$, $\varphi^{\prime\prime} = (m^2 - E^2)\varphi > 0$, which means that $\varphi$ is convex near $\infty$.    
If $\varphi$ is concave near $0$, then the theorem is proved.   
If $\varphi$ is convex near $0$, then $\varphi^{\prime\prime}$ should change its sign at least at some  solutions $\{r_1, r_2\}$, of the equation $m^2 - \big((E - V(r)\big)^2 + \frac{Q}{r^2} = 0,$ in order to be positive near $\infty$.
\item {\bf $\varphi$ has one node:}  
Suppose that $\varphi$ has one node $\alpha$, then $\varphi^{\prime\prime} (r)= 0\Longleftrightarrow r = \alpha$ or $r = r_1, r_2, ..., r_n$, where the $r_i's$ are roots of the equation $ m^2 - \big(E - V(r)\big)^2 + \frac{Q}{r^2} = 0$, $i = 1 ... n$.
We now study the sign of $\varphi^{\prime\prime}$ for $r > \alpha$: If $\varphi (r)> 0$, then, owing to to the fact that $\varphi$ vanishes at $\infty$, it should attain a maximum value over the interval $(\alpha, \infty)$ becoming concave near $\alpha^+$. Similarly, we deduce that $\varphi^{\prime\prime}(r)$ should change its sign at least once over $(\alpha,\infty)$, 
implying that  there exists $r_i \in(\alpha, \infty)$.
If $\varphi (r)< 0$, then we can also prove this lemma by the same reasoning.
\item {\bf $\varphi$ has $n$ nodes, $n\geq 2$:}
    
Suppose that $\varphi$ has $n$ nodes, $\alpha_1, \alpha_2, ... , \alpha_n$ with $n\geq 2$.

Then $\varphi^{\prime\prime}(r) = 0\Longleftrightarrow m^2 -\big(E - V(r)\big)^2 +\frac{Q}{r^2} = 0$ or $\varphi = 0,$ which means:
    
$r = \alpha_1, \alpha_2, ..., \alpha_n, r_1, r_2, ..., r_n,$ where the $r_i's$ are the solutions of the equation
    
 $ m^2 - \big(E - V(r)\big)^2 +\frac{Q}{r^2} = 0$, $i = 1, 2, ..., n.$
    
We study the concavity of $\varphi$ over the interval $(\alpha_{n-1}, \alpha_n)$:
    
If there exists some $r_i\in (\alpha_{n-1}, \alpha_n)$, then $\varphi^{\prime\prime}$ changes its sign at least once over $(\alpha_{n-1},\alpha_{n})$
    
If there isn't any inflection point of $\alpha$ between $\alpha_{n-1}$ and  $\alpha_n$, then $\varphi^{\prime\prime}$ doesn't change its sign on $(\alpha_{n-1}, \alpha_n)$; however, since $\varphi$ vanishes at $\infty$, then there must be at least one inflection point $r_i\in  (\alpha_{n-1}, \alpha_n)$, which means that $\varphi$ changes its concavity at least once over $(\varphi_{n-1},\infty)$.
\end{enumerate}\end{proof}
\begin{lemma}
The integral $I$ defined in relation (\ref{I}) is non-positive for any state $\varphi$ and for all $d > 1$, except for the $s$-states of $d = 2$, that is: when $d = 2$ and $l = 0$.
\end{lemma}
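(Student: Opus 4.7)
The plan is to parallel the proof of Lemma 0.3.3 from the one-dimensional case, now applied to the reduced radial Klein--Gordon equation (\ref{KG4}). Treating (\ref{KG4}) pointwise as a quadratic in $E$ at each $r>0$ and solving yields
\[
E \;=\; vf(r)\;\pm\;\sqrt{\,m^{2}+Q/r^{2}-\varphi''(r)/\varphi(r)\,},
\]
the radicand being $(E-V(r))^{2}\ge 0$. The structural fact that makes the argument go through in the cases claimed, and fail in the excluded case $(d,l)=(2,0)$, is that $Q=\tfrac{1}{4}(2l+d-1)(2l+d-3)\ge 0$ for every $d\ge 2$ and $l\ge 0$ except precisely $(d,l)=(2,0)$, where $Q=-1/4$. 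Hence, under the hypotheses of the lemma, $Q/r^{2}\ge 0$ on $(0,\infty)$, and the uniform lower bound $\sqrt{\,m^{2}+Q/r^{2}-\varphi''/\varphi\,}\ge m$ holds whenever $\varphi''/\varphi\le 0$.

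The steps would then mirror those in Lemma 0.3.3. First, I would exclude the minus branch at any $r$ with $\varphi''/\varphi\le 0$: selecting it there forces $E\le vf(r)-m\le -m$, contradicting $|E|<m$. Second, I would invoke Lemma 0.3.4 to produce such a set; this is conveniently done by observing that (\ref{KG4}) itself gives $\varphi''/\varphi=m^{2}+Q/r^{2}-(E-V(r))^{2}$ as a continuous function of $r$ whose sign changes are precisely the inflection points of $\varphi$, and these exist by Lemma 0.3.4. Third, a constancy-plus-continuity argument ($E$ is a single fixed number, and the two algebraic branches are continuous in $r$ that coincide only where the radical vanishes) propagates the plus branch, giving $vf(r)-E=-\sqrt{\,m^{2}+Q/r^{2}-\varphi''/\varphi\,}\le 0$ on all of $(0,\infty)$. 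Substituting into the $d$-dimensional analogue of (\ref{I}) on $[0,\infty)$ and using $f_{2}-f_{1}\ge 0$ together with $\varphi^{2}\ge 0$ produces a pointwise non-positive integrand, and hence $I\le 0$.

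I expect the main obstacle to be making the global branch selection fully rigorous: the quantity $E-vf(r)$ can in principle change sign across points where $E=V(r)$, so one must argue that the two branches cannot be ``patched'' across such a crossing and that the minus branch, once excluded on a set of positive measure, is excluded everywhere. I would handle this by a connectedness argument on $(0,\infty)\setminus\{r:E=V(r)\}$, combined with continuous dependence of the algebraic branches on $r$ and the pointwise exclusion already established in the first step. Finally, the excluded case $(d,l)=(2,0)$ fails for a concrete reason: there $Q=-1/4$, so $m^{2}+Q/r^{2}$ is unbounded below as $r\to 0^{+}$, the bound used in step~(i) collapses, the minus branch can no longer be ruled out by this method, and a different treatment---deferred by the authors to the next section---is required.
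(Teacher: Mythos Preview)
Your approach is essentially the paper's: treat (\ref{KG4}) as a quadratic in $E$, obtain $E=vf(r)\pm\sqrt{m^2+Q/r^2-\varphi''/\varphi}$, exclude the minus branch at points where $\varphi''/\varphi\le 0$ (using $Q\ge 0$), and substitute the plus branch into $I$ to get a pointwise non-positive integrand. The paper carries out exactly these steps and in fact simply asserts the plus branch globally, with no connectedness discussion at all; in that respect you are already being more careful than the paper.

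However, the connectedness argument you propose does not close the gap, and this is a genuine difficulty that the paper's own proof leaves open as well. The radicand equals $(E-V(r))^2$, so the square root is $|E-V(r)|$ and the ``branch'' is dictated pointwise by the sign of $E-V(r)$, not by any continuation choice. When $E<0$ this sign really changes: oscillation of $\varphi$ forces $(E-V(r))^2>m^2+Q/r^2\ge m^2$ somewhere, hence $V<E-m<E$ there (so $E-V>0$ near the origin), while $V(r)\to 0>E$ gives $E-V<0$ for large $r$. On the resulting outer component $(r_0,\infty)$ one has $\varphi''/\varphi=m^2+Q/r^2-(E-V)^2>0$, so your step-(i) exclusion is inapplicable there; the minus branch is the correct one and the integrand $(f_2-f_1)(vf-E)\varphi^2$ is pointwise \emph{non-negative} on that set. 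Your connectedness argument cannot propagate the plus branch across $\{r:V(r)=E\}$ because the branches genuinely switch there, and the paper's bare assertion does not address it either.
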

\begin{proof}
The expression (\ref{KKG}), written as
\begin{eqnarray*}
\big(\varphi(r)\big)E^2 - \big(2vf(r)\varphi(r)\big) E + \bigg(\varphi^{\prime\prime}(r) - m^2\varphi(r) + v^2f^2(r)\varphi(r) -\frac{Q}{r^2}\varphi(r)\bigg) = 0,
\end{eqnarray*}
is a quadratic equation in $E$.
    
Thus, 
\begin{eqnarray*}
E = vf(r)\pm\sqrt{m^2 + \frac{Q}{r^2} - \dfrac{\varphi^{\prime\prime}(r)}{\varphi(r)}}.
\end{eqnarray*}
If $\varphi^{\prime\prime}(r) < 0$, then  $vf(r) - \sqrt{m^2 + \dfrac{Q}{r^2} - \frac{\varphi^{\prime\prime}(r)}{\varphi(r)}} < -m$, which means that $E$ cannot take this value since $|E| < m$. Hence,
\begin{eqnarray}\label{E2}
E =  vf(r) + \sqrt{m^2 + \dfrac{Q}{r^2} - \dfrac{\varphi^{\prime\prime}(r)}{\varphi(r)}}.
\end{eqnarray}
Using relation (\ref{E2}) in (\ref{I}) we get:
\begin{eqnarray*}
I = -\displaystyle\int_{0}^{\infty}\big(f_2(r) - f_1(r)\big)\sqrt{m^2 + \frac{Q}{r^2} - \frac{\varphi^{\prime\prime}(r)}{\varphi(r)}}\varphi^2(r) dr \leq 0.
\end{eqnarray*}
\end{proof}
\begin{thm}
\begin{eqnarray*}
f_1(r)\leq f_2(r)\Longrightarrow v_1\leq v_2,
\end{eqnarray*}
for all $r\in[0,\infty)$ and $d > 1$, except for the $s$-states for $d = 2$, that is, when $d = 2$ and $l = 0$.
\end{thm}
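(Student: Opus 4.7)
The plan is to mirror the one-dimensional argument (Theorem 0.3.1) almost line-for-line, exploiting the fact that the centrifugal term $Q/r^2$ has no $a$-dependence and therefore drops out of the differentiation with respect to the homotopy parameter. First I would parametrize the potential shape by $f(a,r)=f_1(r)+a[f_2(r)-f_1(r)]$ with $a\in[0,1]$, hold $E\in(-m,m)$ fixed, and regard $v=v(a)$. Then I would form the radial operator $K=-\partial^2/\partial r^2+2Evf-v^2f^2+Q/r^2$, note that the reduced eigen-equation $K\varphi=(E^2-m^2)\varphi$ and the normalization $\int_0^\infty\varphi^2\,dr=1$ are preserved, and differentiate $\langle\varphi,K\varphi\rangle=E^2-m^2$ with respect to $a$. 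As in Section 0.3.1, the symmetry of $K$ and the $a$-orthogonality $\langle\varphi,\varphi_a\rangle=0$ kill the two outer terms, leaving $\langle\varphi,K_a\varphi\rangle=0$. Because $\partial_a(Q/r^2)=0$, the expression for $K_a$ is identical to the one-dimensional one, and one recovers
\begin{equation*}
v_a=\frac{vI}{E\langle f\rangle-v\langle f^2\rangle},\qquad I=\int_0^\infty\bigl(f_2(r)-f_1(r)\bigr)\bigl(vf(r)-E\bigr)\varphi^2(r)\,dr.
\end{equation*}

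The second step is to check that the denominator is strictly negative for every $|E|<m$. For $E\geq0$ this is immediate from $\langle f\rangle\leq 0\leq\langle f^2\rangle$ together with $v>0$. For $E<0$, the inequality $2E\langle f\rangle<v\langle f^2\rangle$, whose validity for $d>1$ has already been noted in the paper immediately after equation (\ref{KG4}), gives $E\langle f\rangle<2E\langle f\rangle<v\langle f^2\rangle$, which again yields $E\langle f\rangle-v\langle f^2\rangle<0$. Combined with Lemma 0.3.5, which is precisely the radial analogue of Lemma 0.3.3 and guarantees $I\leq 0$ for every discrete state provided we are not in the excluded case $d=2$, $l=0$, we obtain $v_a\geq 0$ for every $a\in[0,1]$.

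The conclusion then follows from the fundamental theorem of calculus: $v_2-v_1=v(1)-v(0)=\int_0^1 v_a(a)\,da\geq 0$, which is exactly the claimed monotonicity $f_1\leq f_2\Rightarrow v_1\leq v_2$. The only delicate point, and the reason the statement carries an exception, is the quadratic-formula step inside Lemma 0.3.5: solving the Klein--Gordon equation as a quadratic in $E$ produces the discriminant $m^2+Q/r^2-\varphi''/\varphi$, and the argument that only the $+$-root is admissible (because the other would violate $|E|<m$) requires $Q\geq 0$. Since $Q=\tfrac14(2l+d-1)(2l+d-3)$ is non-negative in every case except $d=2$, $l=0$ (where $Q=-\tfrac14$), the present proof inherits exactly that exclusion and no other. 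I expect no further obstacles: once the $v_a$-formula and the sign of its denominator are in hand, the theorem reduces to integrating a non-negative function, so the only genuine work is the careful sign analysis already accomplished in Lemmas 0.3.4 and 0.3.5.
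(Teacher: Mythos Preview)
Your proposal is correct and follows essentially the same route as the paper, whose proof is literally ``Same proof as Theorem 0.3.1'': you derive the identical $v_a$-formula (the centrifugal term $Q/r^2$ being $a$-independent), show the denominator is negative via $\langle f\rangle\le 0$ for $E\ge 0$ and via inequality (\ref{mineq}) for $E<0$, invoke Lemma 0.3.5 for $I\le 0$, and conclude $v_a\ge 0$. Your explanation of why the case $d=2$, $l=0$ is excluded (namely $Q=-\tfrac14<0$ spoils the sign argument in the quadratic-formula step of Lemma 0.3.5) is also exactly the paper's reasoning.
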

\begin{proof}
Same proof as {\bf theorem 0.3.1}
\end{proof}
\subsection{S-States for the $2$-dimensional case}
The reduced Klein--Gordon equation in this case reads
\begin{eqnarray*}
\varphi^{\prime\prime} (r) = \bigg[m^2 - \big(E - V(r)^2\big)-\dfrac{1}{r^2}\bigg]\varphi(r).
\end{eqnarray*}
Thus $E = vf(r)\pm\sqrt{m^2 - \dfrac{1}{r^2}-\dfrac{\varphi^{\prime\prime}(r)}{\varphi(r)}}.$
Eliminating the solution $E = vf(r) - \sqrt{m^2 - \dfrac{1}{r^2}-\dfrac{\varphi^{\prime\prime}(r)}{\varphi(r)}}$ fails because of the existence of the term $-\dfrac{1}{r^2}$, and consequently, the proof of {\bf theorem 0.3.1} is not valid.
Thus, we use the non-reduced form of the Klein--Gordon radial equation, namely 
\begin{eqnarray*}
R^{\prime\prime}(r) + \dfrac{d-1}{r} R^\prime (r) =\bigg[m^2 - \big(E-V(r)\big)^2+\dfrac{l(l+d-2)}{r^2}\bigg]R(r),
\end{eqnarray*}
where $d = 2$, $l = 0$, and $\displaystyle\int_0^\infty R^2(r) dr = 1$. Hence,
\begin{eqnarray}\label{NR}
R^{\prime\prime}(r) + \frac{1}{r}R^\prime(r) = \bigg[m^2 - \big(E - V(r)\big)^2\bigg]R(r).
\end{eqnarray}
We assume that $V = vf$, with $f$ satisfying the same conditions of {\bf section 0.3.1}.
    
Define the symmetric operator
\begin{eqnarray*}
K = -\frac{\partial^2}{\partial r^2} - \frac{\partial}{\partial r} + 2Evf - v^2 f^2.
\end{eqnarray*}
Then
\begin{eqnarray}\label{exp}
\langle K\rangle = E^2 - m^2.
\end{eqnarray}
    
Differentiating (\ref{exp}) with respect to the parameter $a$ we get
\begin{eqnarray}\label{df}
 \langle R_a,KR\rangle + \langle R,K_aR\rangle + \langle R, KR_a\rangle = 0,
\end{eqnarray}
where $K_a = \frac{\partial K}{\partial a}$.
    
But 
\begin{eqnarray*}
\frac{\partial}{\partial a}\bigg[\displaystyle\int_0^\infty R^2(r) dr\bigg] = 2 \displaystyle\int_0^\infty R(r)\frac{\partial R(r)}{\partial a} = 0.
\end{eqnarray*}
Then we obtain the orthogonality relation $\langle R,R_a\rangle = \langle R_a, R\rangle = 0$.
    
Therefore, $\langle R_a,KR\rangle = \langle R, KR_a\rangle = (E^2-m^2)\langle R,R_a\rangle = 0$, with $R_a = \frac{\partial R}{\partial a}$.
We also have $K_a = 2Ev_af + 2Ev(f_2 - f_1) - 2vv_af^2 - 2v^2f(f_2 - f_1),$ where $v_a = \frac{\partial v}{\partial a}$.
    
Thus, using $K_a$ in equation (\ref{df}) we obtain
\begin{eqnarray}\label{va}
v_a = \dfrac{v\bigg[\displaystyle\int_0^\infty\big(f_2(r) - f_1(r)\big)\big(vf(r) - E\big)R^2(r)\bigg]}{E\langle f\rangle - v\langle f^2\rangle}
\end{eqnarray}
    
Writing the equation (\ref{NR}) as
\begin{eqnarray*}
\big(R(r)\big)E^2 - 2\big(vf(r)R(r)\big)E + \big(R^{\prime\prime}(r) + \frac{1}{r}R^\prime(r) + v^2 f^2(r)R(r) - m^2R(r) = 0,
\end{eqnarray*}
we obtain a quadratic equation of $E$. Thus
\begin{eqnarray}\label{EC}
E = vf(r)\pm\sqrt{m^2 - \frac{R^{\prime\prime}(r)}{R(r)} - \frac{R^\prime(r)}{rR(r)}}.
\end{eqnarray}
\medskip
\begin{lemma}
There exists an interval $J\subset [0,\infty)$ such that $ - \dfrac{R^{\prime\prime}(r)}{R(r)} - \dfrac{R^\prime(r)}{rR(r)}>0$.
\end{lemma}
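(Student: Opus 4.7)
The plan is to deduce the inequality directly from the differential equation (\ref{NR}) by a one-line energy identity, and then invoke continuity. First I would observe that dividing (\ref{NR}) by $R(r)$ at points where $R(r) \neq 0$ gives the pointwise identity
\begin{eqnarray*}
-\frac{R''(r)}{R(r)} - \frac{R'(r)}{rR(r)} = \big(E - V(r)\big)^2 - m^2,
\end{eqnarray*}
so it suffices to exhibit an interval on which $(E - V(r))^2 > m^2$.

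Next I would rewrite (\ref{NR}) in the Sturm--Liouville form $\big(rR'(r)\big)' = r\,\big[m^2 - (E - V(r))^2\big]R(r)$, multiply both sides by $R(r)$, and integrate over $[0, \infty)$. Integration by parts on the left-hand side yields
\begin{eqnarray*}
\int_0^\infty r\,\big[(E - V(r))^2 - m^2\big] R^2(r)\,dr = \int_0^\infty r\,\big(R'(r)\big)^2\,dr,
\end{eqnarray*}
provided the boundary contribution $\big[r R(r) R'(r)\big]_0^\infty$ vanishes. This requires: at $r = 0$, the factor $r$ together with the standard regularity of the $s$-state solution (continuity of $R$ at the origin and $R'(0)=0$); at infinity, exponential decay of $R$ and $R'$, inherited from the asymptotic equation $R'' + R'/r = (m^2 - E^2)R$ and the condition $|E| < m$ established earlier.

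Since $R$ is a nontrivial normalized bound state that decays to zero at infinity, it cannot be constant, so $R' \not\equiv 0$ on $(0, \infty)$ and the right-hand side above is strictly positive. Therefore the integrand on the left must be positive on a set of positive measure. By continuity of $V$ and $R$, the set $\{r \in [0,\infty): (E - V(r))^2 > m^2\}$ is open and nonempty, hence it contains an interval $J$, on which the stated inequality holds in view of the pointwise identity above. The most delicate point in filling in the details is the boundary behavior at $r = 0$: the $1/r$ singularity in the equation must be controlled by the regular behavior of the $s$-state there, and one has to confirm that the ground state and excited states all share this regularity so that the boundary term truly vanishes; the remaining steps are routine.
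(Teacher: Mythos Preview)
Your argument is correct and takes a genuinely different route from the paper's own proof. The paper does not use an integral identity at all: instead it argues directly about the qualitative shape of $R$. In the node-free case it locates a point $r_0$ where $R$ attains its maximum (so $R>0$, $R'<0$ past $r_0$), then finds an inflection point $r_i>r_0$ before which $R''<0$, and takes $J=(r_0,r_i)$; on this interval each of $R$, $-R'$, $-R''$ is positive, so the quotient is positive term by term. The excited-state case is handled similarly on the interval after the last node. This is a constructive, case-by-case geometric argument.

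Your approach bypasses all of that: via the Sturm--Liouville form $(rR')'=r[m^2-(E-V)^2]R$ and one integration by parts you obtain the energy identity $\int_0^\infty r[(E-V)^2-m^2]R^2\,dr=\int_0^\infty r(R')^2\,dr>0$, which forces $(E-V(r))^2>m^2$ on a set of positive measure; continuity then yields an interval, and the pointwise identity finishes. This is shorter, avoids the ground/excited case split, and in fact extracts more: it shows the positivity set has positive measure with respect to $rR^2\,dr$, not merely that it is nonempty. The paper's argument, on the other hand, is more elementary in that it needs no integration-by-parts boundary analysis and gives an explicit interval. Two small points to tighten in your write-up: (i) after finding the open set $\{(E-V)^2>m^2\}$ you should also remove the (isolated) zeros of $R$ before choosing $J$, so that the quotient is defined there; (ii) the regularity at $r=0$ you invoke is exactly that the physical $s$-state in $d=2$ is the solution finite at the origin (Bessel-type, with $R'(0)=0$), excluding the logarithmically singular one---this is standard but worth stating, as you note.
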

\begin{proof}

\begin{enumerate}

\item {\bf $R$ is a node-free state:}
    
$R^{\prime\prime}(r) = 0$ $\iff$ $m^2 - \big(E-V(r)\big)^2 - \dfrac{\R^\prime(r)}{rR(r)} = 0$
$\iff$ $V(r) = E\pm\sqrt{m^2 - \dfrac{R^\prime}{rR(r)}}$.
    
If $R$ is decreasing near $0$, then $-\dfrac{R^\prime(r)}{rR(r)}>0$ near $0$.
    
If $R$ is increasing near $0$, then it must attain a maximum value at some $r_0\in[0,\infty)$ and end up decreasing since $\displaystyle\lim_{r\to\infty}R(r) = 0$. Thus, $-\dfrac{R^\prime}{rR(r)}>0$ on $(r_0,\infty)$.
    
Hence, in both cases $R$ must be decreasing on a subset $(r_0,\infty)$ of $[0,\infty)$, and $\sqrt{m^2-\dfrac{R^\prime(r)}{rR(r)}}>m$ on this subset interval.
    
Therefore, $V$ cannot take the value $ E+\sqrt{m^2-\dfrac{R^\prime(r)}{rR(r)}}$ since $V$ is non-positive and 
\begin{eqnarray}\label{RV}
R^{\prime\prime}(r) = 0\iff V(r) = E-\sqrt{m^2-\frac{R^\prime(r)}{rR(r)}}.
\end{eqnarray}
    
Let $r_i$ be a root of equation (\ref{RV}). 
    
If $r_i\in(r_0,\infty)$, then $J = (r_0,r_i)$.
    
If $r_i\notin (r_0,\infty)$, then there must exist at least another inflection point $r_j\in (r_0,\infty)$ because $R$ vanishes at infinity, which also implies that $R>0, R^\prime<0$, and $R^{\prime\prime}<0$ on $(r_0,r_j)$. Therefore, $J=(r_0,r_j)$.
\item {\bf $R$ is an excited State:}
Suppose that $R$ has $n$ nodes $\alpha_1, \alpha_2,...\alpha_n$ and consider the interval $(\alpha_n,\infty)$.
    
Then 
\begin{eqnarray}\label{INF}
R^{\prime\prime}=0 \iff m^2-\big(E-V(r)\big)^2-\dfrac{R^\prime(r)}{rR(r)}=0.
\end{eqnarray}
If $R$ is increasing near $\alpha^+$, then it should attain a maximum value at some $r_0\in (\alpha_n,\infty)$, become decreasing, and change its concavity at $r_i\in(r_0,\infty)$, where $r_i$ is a root of equation (\ref{INF}), since $\lim_{x\to\infty}{R(r)}=0.$
Hence, $R>0, R^\prime<0,$ and $R^{\prime\prime}<0$ on $(r_0, r_i)$ and therefore $J=(r_0,r_i)$.
    
If $R$ is decreasing near $\alpha^+$, then by the same reasoning we conclude that $R<0, R^\prime>0$, and $R^{\prime\prime}>0$ on  $(r_0, r_i)$ and $J=(r_0,r_i)$.
\end{enumerate}
\end{proof}
\noindent Since we have proven the existence of an interval $J$ such that $ - \dfrac{R^{\prime\prime}(r)}{R(r)} - \dfrac{R^\prime(r)}{rR(r)}>0$, and since $|E|<m$, then the option $E=vf(r)-\sqrt{m^2 - \dfrac{R^{\prime\prime}(r)}{R(r)} - \dfrac{R^\prime(r)}{rR(r)}}$ in expression (\ref{EC}) is falsified.
    
Therefore
\begin{eqnarray}\label{ET}
E = vf(r) + \sqrt{m^2 - \frac{R^{\prime\prime}(r)}{R(r)} - \frac{R^\prime(r)}{rR(r)}}.
\end{eqnarray}
\medskip
\begin{thm}
\begin{eqnarray*}
f_1(r)\leq f_2(r)\Longrightarrow v_1\leq v_2,
\end{eqnarray*}
for all $r\in [0,\infty)$.
\end{thm}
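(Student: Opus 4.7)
The plan is to adapt, essentially verbatim, the strategy used for \textbf{Theorem 0.3.1} to the non-reduced radial equation (\ref{NR}), using the expression (\ref{va}) for $v_a$ in place of (\ref{rel1}). The crucial ingredient---selection of the plus sign in the quadratic formula (\ref{EC})---has already been supplied by \textbf{Lemma 0.3.4} together with the discussion immediately preceding the theorem statement, which together establish the representation (\ref{ET}) for $E$ throughout $[0,\infty)$. Hence what remains is a sign analysis of the numerator and denominator of (\ref{va}), carried out along the homotopy $f = f_1 + a(f_2 - f_1)$, $a \in [0,1]$.

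First I would show that the numerator of (\ref{va}) is non-positive. Substituting (\ref{ET}) gives
$$vf(r) - E \;=\; -\sqrt{m^2 - \frac{R^{\prime\prime}(r)}{R(r)} - \frac{R^\prime(r)}{rR(r)}} \;\leq\; 0$$
pointwise in $r$. Since $f_2(r) - f_1(r) \geq 0$ by hypothesis and $R^2(r) \geq 0$, the integrand $(f_2 - f_1)(vf - E)R^2$ is non-positive, hence the numerator is $\leq 0$.

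Next I would show the denominator $E\langle f\rangle - v\langle f^2\rangle$ is strictly negative for every $|E| < m$. When $E \geq 0$, this is immediate, since $f \leq 0$ forces $E\langle f\rangle \leq 0$ while $v\langle f^2\rangle > 0$. When $E < 0$, I would first establish the non-reduced 2D analog of the inequality (\ref{mineq}), namely $2E\langle f\rangle < v\langle f^2\rangle$, by multiplying (\ref{NR}) by $R$, integrating over $[0,\infty)$, and using the behaviour of $R$ at $0$ and at infinity to recast the identity in a form whose left-hand side is manifestly negative---this mimics \textbf{Lemma II.1}. Combined with $E < 0$ and $\langle f\rangle \leq 0$ (so that $E\langle f\rangle \geq 0$ and hence $E\langle f\rangle \leq 2E\langle f\rangle$), this yields $E\langle f\rangle - v\langle f^2\rangle < 0$.

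Combining these sign conclusions, $v_a \geq 0$ for every $a \in [0,1]$, so $v(a)$ is monotone non-decreasing along the homotopy; evaluating at the endpoints then gives $v_1 = v(0) \leq v(1) = v_2$, as required. The step I expect to be the main obstacle is the adaptation of \textbf{Lemma II.1} to the non-reduced equation (\ref{NR}): the first-derivative term $R^\prime/r$ contributes a boundary piece at $r = 0$ when integration by parts is applied, and this must be disposed of carefully (for instance by exploiting the regularity condition $R^\prime(0) = 0$ satisfied by a $2$-dimensional $s$-state, and the consequent behaviour of $R^2/r$ near the origin) before the sign of the resulting quadratic form in $v$ can be read off.
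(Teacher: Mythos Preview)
Your proposal is correct, but the paper handles the denominator differently and more simply. You substitute (\ref{ET}) only into the numerator of (\ref{va}) and then propose to re-derive the analogue of Lemma~II.1 for the non-reduced equation (\ref{NR}) in order to conclude $E\langle f\rangle - v\langle f^2\rangle < 0$; you rightly flag the $R'/r$ boundary contribution at the origin as the delicate point. The paper avoids this entirely by applying (\ref{ET}) to the denominator as well: writing $E\langle f\rangle - v\langle f^2\rangle = \int_0^\infty f(r)\,(E - vf(r))\,R^2(r)\,dr = \int_0^\infty f(r)\,\sqrt{m^2 - R''/R - R'/(rR)}\,R^2(r)\,dr$, which is strictly negative since $f\leq 0$ and $f\not\equiv 0$. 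The quotient is then a ratio of two non-positive quantities, and $v_a\geq 0$ follows immediately, with no need for an integration-by-parts argument or any boundary analysis at $r=0$. Your route works in principle but is strictly harder; the paper's substitution trick dispatches the denominator in one line.
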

\begin{proof}
Using the expression (\ref{ET}) in equation (\ref{va}) we get
\begin{eqnarray*}
v_a = \dfrac{v\int_0^\infty \bigg[(f_2(r) - f_1(r))\sqrt{m^2 - \dfrac{R^{\prime\prime}(r)}{R(r)} - \dfrac{R^\prime(r)}{rR(r)}}R^2(r)\bigg]dr}{\int_0^\infty\bigg[\sqrt{m^2 - \dfrac{R^{\prime\prime}(r)}{R(r)} - \dfrac{R^\prime(r)}{rR(r)}}R^2(r)\bigg]dr} > 0.
\end{eqnarray*}
 Hence, the proof is complete.
\end{proof}
\section{Square-Well spectral bounds for general bounded potential shapes }
In this section we exhibit a complete recipe for finding square-well potential bounds for any {\it bounded} potential shape $f$ in the class considered in the previous sections, and consequently, spectral bounds for the coupling $v$, provided the energy is fixed.
We have chosen the square-well potential because we know the analytical solution for the Klein--Gordon problem with this potential. Before showing this solution, we state the following lemma:
\begin{lemma}
Consider the $d$-dimensional Klein--Gordon equation ($d\geq 0$)
\begin{eqnarray}\label{kn} 
\varphi^{\prime\prime}(r) =\left[m^2 - \big(E - V(r)\big)^2 + \frac{Q}{r^2} \right]\varphi(r),
\end{eqnarray}
 where $V(r) = vf(r)$ and $f$ belongs to the class of potential shapes defined in the previous sections. We define $s>0$ and $E_1$ to be the new energy corresponding to the potential $V_1(r) = v(f(r)-s)$. Then $|E + vs| < m$ and $E_1 = E - vs$. 
\end{lemma}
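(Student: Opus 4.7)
The plan is to exploit the $(E-V)^2$ structure of the Klein--Gordon operator in (\ref{kn}), which is invariant under any simultaneous constant shift of the vector potential and the energy by the same amount. In effect the whole lemma is this "gauge" identity, applied with shift $c = vs$.

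First, I would observe that $V_1(r) = v(f(r) - s) = V(r) - vs$, and consequently, if I set $E_1 := E - vs$, then
$$E_1 - V_1(r) \;=\; (E - vs) - (V(r) - vs) \;=\; E - V(r).$$
Substituting into (\ref{kn}) with $V$ replaced by $V_1$ and $E$ replaced by $E_1$ therefore yields exactly the equation originally satisfied by $\varphi$. Hence $\varphi$ itself, unchanged, is an eigenfunction of the shifted problem at energy $E_1 = E - vs$, with the same behaviour at $r=0$ and the same $L^2$ decay at infinity. This gives the second assertion of the lemma.

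Next, to check that $E_1$ is a genuinely admissible bound-state eigenvalue for the shifted potential, I would examine the asymptotics as $r \to \infty$. Since $f(r) \to 0$, the shifted potential satisfies $V_1(r) \to -vs$, so (\ref{kn}) (with $V_1, E_1$) reduces asymptotically to $\varphi''(r) = [m^2 - (E_1 + vs)^2]\varphi(r)$. The existence of a decaying solution demands $|E_1 + vs| < m$; but $E_1 + vs = E$, so this is automatic from the hypothesis $|E| < m$ on the original eigenvalue. The inequality displayed in the lemma is thus the shift-compatibility condition, which I interpret as $|E_1| = |E - vs| < m$ (the bound needed to keep $E_1$ inside the admissible window $(-m, m)$ in the same sense as (\ref{cond})); the written form $|E+vs|<m$ is most naturally read as asserting this shift-consistency, and under this reading follows from the original $|E|<m$ together with the computation above.

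There is essentially no obstacle: the entire statement is an algebraic identity for the Klein--Gordon operator under constant shifts of the time component of the four-potential. The only point that deserves a line of care is the persistence of the $L^2$-normalizability after the shift, which is immediate because the wave function $\varphi$ itself is literally unchanged and only the asymptotic background level of $V$ has moved, without affecting either the boundary condition at $r=0$ or the decay rate $\sqrt{m^2 - E^2}$ at infinity.
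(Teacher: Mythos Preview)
Your proof is correct and follows essentially the same route as the paper: both arguments rewrite $E-V(r)=(E-vs)-V_1(r)$ to conclude $E_1=E-vs$, and both obtain the inequality from the asymptotic form of the shifted equation as $r\to\infty$. You are also right to flag the apparent notational slip in the displayed inequality; the paper's own proof in effect derives $|E_1+vs|<m$ (equivalently $|E|<m$) by taking $V_1(r)\to -vs$ in (\ref{kn}), with ``$E$'' there standing for the eigenvalue of the shifted problem.
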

\begin{proof}
For $r\to+\infty$, the Klein--Gordon equation becomes $\varphi^{\prime\prime}(r) =\left[m^2-\big(E+vs\big)^2\right]  \varphi(r)$; thus, $\varphi(r) = C_1e^{kr} + C_2e^{-kr}$ with $k = \sqrt{m^2 - (E + vs)^2}$. Since $\varphi$ vanishes at $\infty$, then $C_1 = 0$, and since $\varphi\in L^2(\mathbb{R})$, then $|E + vs| < m$. Moreover, we can write \ref{kn} as:    
$\varphi^{\prime\prime}(r) = \left[m^2 - \big(E - vs - V(r) + vs \big)^2 + \frac{Q}{r^2} \right]\varphi(r) = \left[m^2 - \big(E - vs - v(f(r) - vs)\big)^2 + \frac{Q}{r^2}\right]\varphi(r)    
 =\left[m^2 - \big((E - vs) - V_1(r)\big)^2 + \frac{Q}{r^2}\right]\varphi(r)$. Therefore, $E_1 = E - vs$.    
\end{proof}

\subsection{A compact recipe for general spectral bounds}
\noindent Consider an attractive potential $V(r) = vf(r)$, where $f$ is a bounded potential shape in the class defined in the previous sections. We want to find the best square-well spectral bounds for the graph $v = G(E)$.
We define the downward vertically-shifted square-well potential 
\[ g(r,t_1) = 
\begin{cases}
           f(0), &  r \leq t_1  \\  
           f(t_1), & {\rm elsewhere}
      \end{cases}
,\]
with $s>0$, and the square-well potential 
\[ g(r,t_2) = 
\begin{cases}
           f(t_2), &  r \leq t_2   \\ 
           0, & {\rm elsewhere}
      \end{cases}
.\]
Thus, $g(r,t_1)\leq f(r)\leq g(r,t_2)$ for all $r\geq 0$, and for each pair of contact points $\{t_1,t_2\}$.    
We observe that $f(r)$ has infinite families of lower and upper bounds $G_L^{(t_1)}(E)\leq G(E)\leq G_U^{(t_2)}(E)$, where $G_L^{(t_1)}(E)$ and $G_U^{(t_2)}(E)$ are the respective spectral functions $v_L(E)$ and $v_U(E)$.    
The final step is to optimize over the parameter $t$ in order to obtain the best square-well spectral bounds for $G(E)$, that is
\begin{eqnarray}\label{bound}
G_L(E) = \displaystyle\max_{t_1>0} G_L^{(t_1)}(E)\leq G(E) \leq G_U(E) = \displaystyle\min_{t_2>0}G_U^{(t_2)}(E).
\end{eqnarray}
These functions are extracted from the eigenvalue equations \ref{SolE} and \ref{SolO} for the one-dimensional case, and from \ref{sphbessel} and \ref{ricbessel} in the higher dimensional cases. For example, we consider a square-well potential with depth $A$ and semi-width $b$ in dimension $d = 1$. Define the new variables $e = Eb$, $u = Ab$, $\mu = mb$, and $t = b\big[(E + A)^2 - m^2\big]^{\frac{1}{2}}$. Then from equation \ref{SolE} the ground state solution  becomes:
\begin{eqnarray*}
e(t) = \pm\big[\mu^2 - \big(t\cdot\tan(t)\big)^2\big]^{\frac{1}{2}} \text{ and }\   u(t) = (t^2+\mu^2)^{\frac{1}{2}} - e(t).
\end{eqnarray*}
For definiteness, we now assume $\mu = 1$. We observe that $e = 0$ when $t = t_0 \approx 0.860334$. The graph depicting $u = G(e)$ is shown in Figure $2$:
\begin{figure}[htbp]
\centering
\includegraphics[scale=0.4]{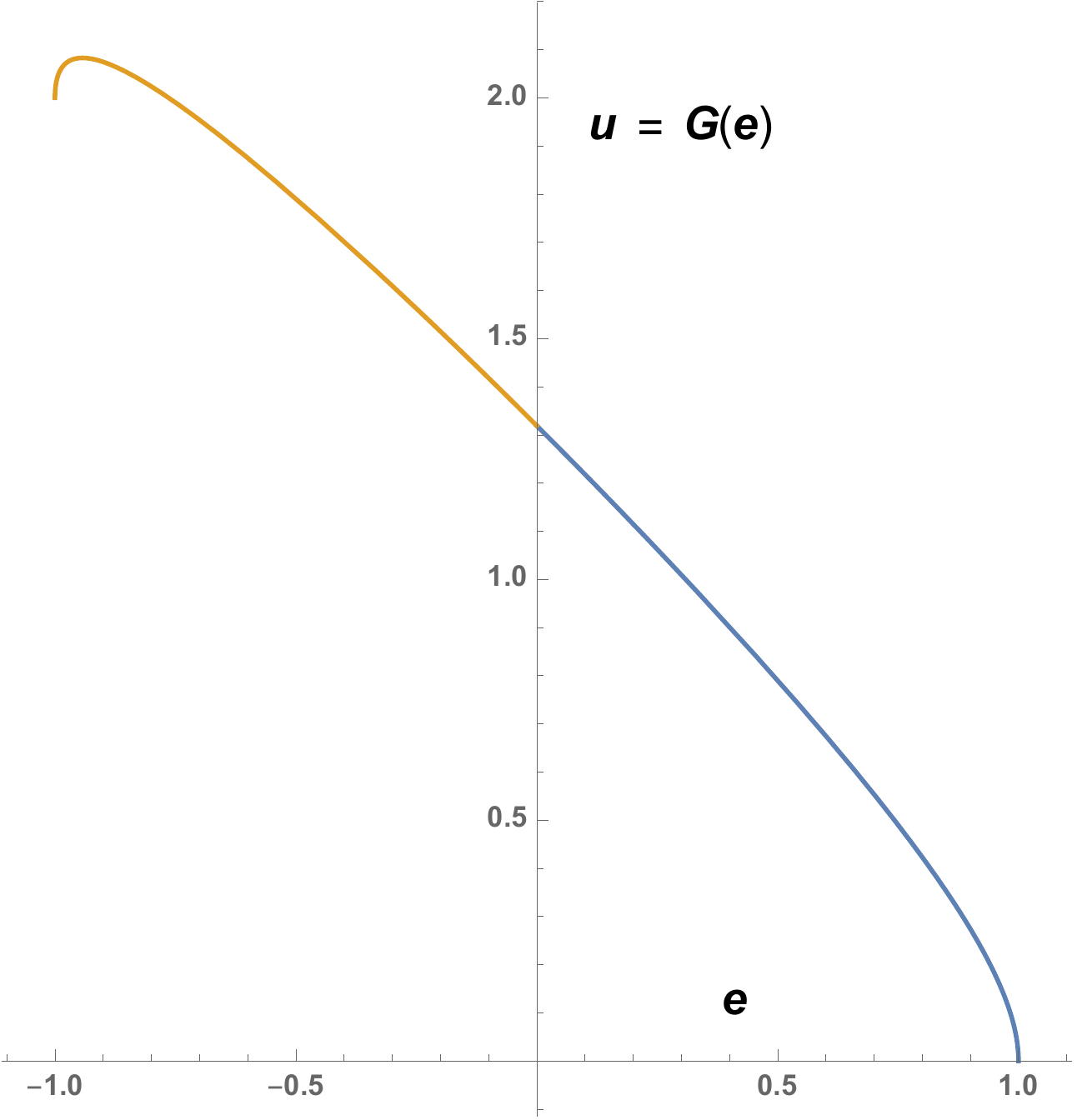}
\caption{u versus e}
\end{figure}
\subsection{The Woods-Saxon potential in $1$ - dimension}

\noindent We consider the Woods-Saxon potential $V(x) = vf(x)$, where $f(x) = -1\left(1 + e^{\frac{(|x| - 1)}{q}}\right)^{-1}$, and $q > 0$ is a range parameter. We are interested in finding an upper bound and a lower bound for the coupling constant $v$, for any given value of $|E| < m$ and for $q = 0.005$. Since the Klein--Gordon equation with the square-well potential had been solved analytically, we use a square-well potential as an upper bound for $f$, and another downward vertically-shifted square-well as a lower bound. We define the functions
\[ g_u(x,0.9675) = 
\begin{cases}
           -0.9984, &  |x| \leq 0.9675\\
           0, & {\rm elsewhere}
      \end{cases}
,\] 
and
\[ g_l(x,1.03) = 
\begin{cases}
           -1.001, &  |x| \leq 1.03\\
           -0.0025, & {\rm elsewhere}
      \end{cases}
.\]
Since $f_l(x) \leq f(x) \leq f_u(x)$ for all $x\in(-\infty,+\infty)$, then according to theorem III.$1$, we conclude that $G_L(E) =  v_ l\leq v \leq G_U(E) = v_u$, where $v_l$ and $v_u$ are the respective couplings for $f_l$ and $f_u$.    For example, if we fix $E = -0.512574196$, we get $v_u = 1.81478$ and $v_l = 1.79017$. Hence we conclude that $1.79017\leq v\leq 1.81478$. This result has been verified numerically, using our own shooting method realized in Maple, and with which we find $v = 1.80494$.
\begin{figure}[htbp]
\centering
\includegraphics[scale=0.4]{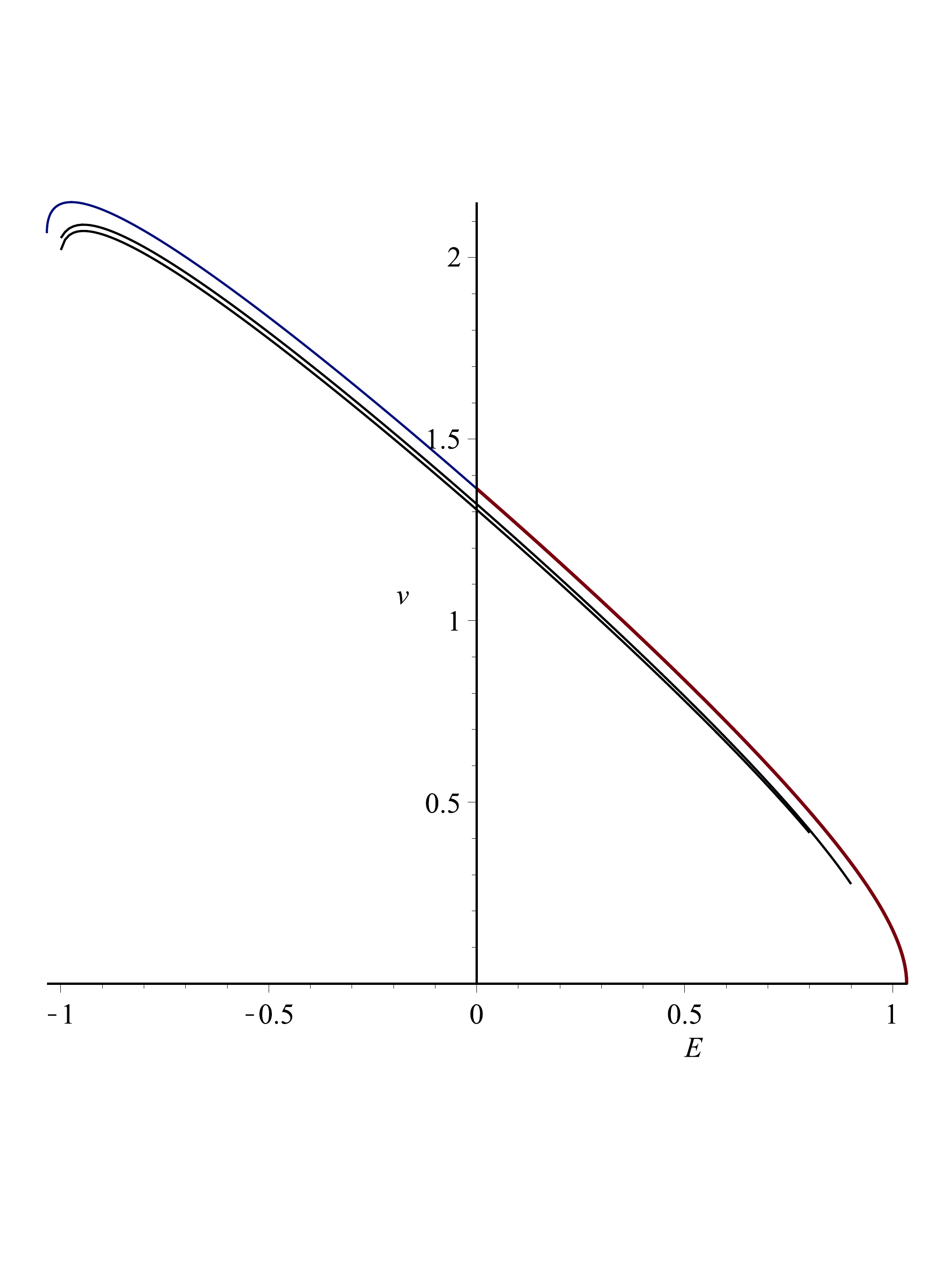}
\caption{Graphs for $v_l$, $v$, and $v_u$ versus $E$ for $-1 < E < 1$.}
\end{figure}
\begin{figure}[htbp]
\centering
\includegraphics[scale=0.35]{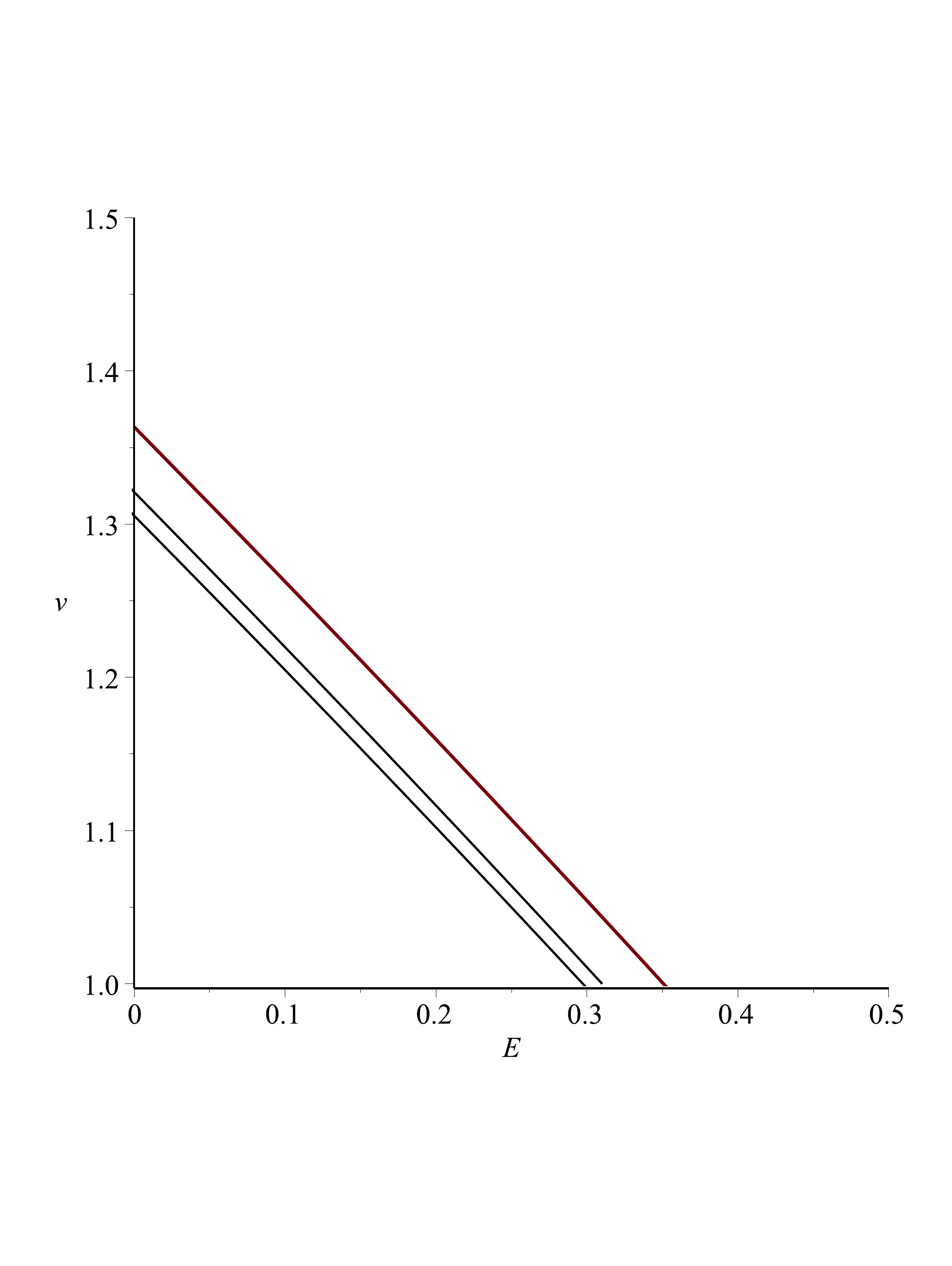}
\caption{Graphs for $v_l$, $v$, and $v_u$ versus $E$ for $0 < E < 0.5$.}
\end{figure}
\begin{figure}[htbp]
\centering
\includegraphics[scale=0.35]{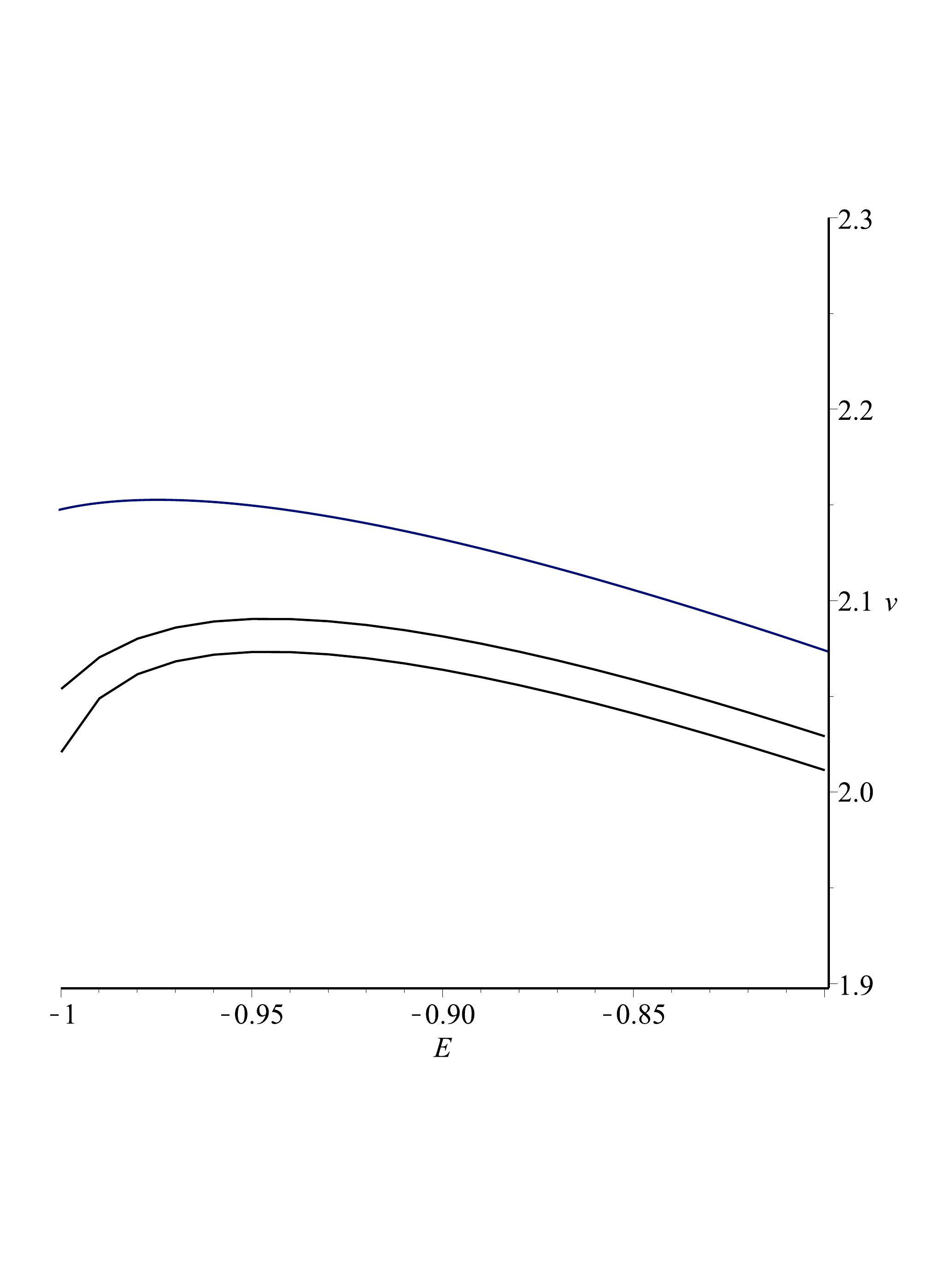}
\caption{Graphs for $v_l$, $v$, and $v_u$ versus $E$ for $-1 < E < -0.8$.}
\end{figure}
\newpage
\section{Conclusion}
The radial reduced eigenequations for a one-particle potential model might in suitable units be written,  for the non-relativistic and Klein-Gordon cases respectively, as:
\begin{itemize}
\item{(NR)}
\begin{eqnarray*}
\varphi^{\prime\prime}(r)=\bigg[ (2m)(v\,f(r)-E)+\frac{Q}{r^2}\bigg]  \varphi(r),
\end{eqnarray*}
\item{(KG)}
\begin{eqnarray*}
\varphi^{\prime\prime}(r)=\bigg[ m^2-\big(E-v\,f(r)\big)^2+\frac{Q}{r^2}\bigg]  \varphi(r), 
\end{eqnarray*}
\end{itemize}
where $Q=\frac{1}{4}(2l+d-1)(2l+d-3),$ the potential has shape $f(r) <0$ and coupling parameter $v > 0$.   We note that  a slightly different formulation of the Klein--Gordon  equation is required if $d = 2$ and $\ell = 0.$  By familiarity with well-known Schr\"odinger  examples, or by a variational analysis of them we expect, for suitable $v> v_0>0$, to find bound states with nonrelativistic energies $E(v)$
having monotonic behaviour $E'(v) < 0$ if the potential shape $f(r)$ is negative.  However, these assumptions are   not  correct for the corresponding Klein--Gordon eigenvalues.  This makes it difficult  to design  physically realistic potential models for relativistic problems.
\medskip

In this paper, we first represent the relation between the coupling $v$ and a discrete Klein-Gordon  eigenvalue $E$ by writing $v$ as a function $v=G(E)$ of $E$ for  $-m < E < m$. We show generally  that the spectral curve $v=G(E)$ is concave, and at most unimodal with a maximum close to $E = -m.$  For the purpose of comparing the  spectral implications of a change in
the potential shape, a bridging parameter $a \in [0,1]$ is introduced such that $f = f_1+a(f_2 - f_1)$.  By studying the dependence of $v$ on $a$ for each fixed value of $E$, we establish the comparison theorem $f_1\leq f_2\Longrightarrow G_{1}(v) \leq G_{2}(v)$. These results are valid for all negative and positive eigenenergies, and for both ground and excited states. They allow us to devise spectral approximations in much the same way as is possible for the corresponding Schr\"odinger  problem where the discrete spectrum can be defined variationally and the concomitant comparison theorems follow almost automatically by means of variational arguments.  As an illustration, we are able to use the exact solution of the square-well problem to construct upper and lower bounds for the discrete Klein--Gordon spectrum generated by any given member of the class of bounded negative potentials that we have considered in the present study.

 \begin{acknowledgments}
Partial financial support of his research under Grant No.~GP3438 from~the Natural
Sciences and Engineering Research Council of Canada is gratefully acknowledged.

 \end{acknowledgments}

\newpage

\end{document}